\newif\iflong
\newif\ifshort



\iflong 
\else
\shorttrue
\fi


\documentclass{article}


\usepackage{balance} 






\newcommand{\mytitle}{Popularity and Perfectness in One-sided Matching Markets with Capacities}

\title{\mytitle}
\author{Gergely Cs{\'a}ji\textsuperscript{\rm 1}}

\date{\
\textsuperscript{\rm 1} \small HUN-REN Centre for Economic and Regional Studies, Hungary 
}
 

\usepackage[utf8]{inputenc}
\usepackage{amsmath,amsthm,amssymb}
\usepackage{graphicx,dsfont,mathtools}

\usepackage{xspace}

\usepackage{hyperref}
\hypersetup{%
 backref=true, 
 pagebackref=true, 
 hypertexnames=true,
 colorlinks=true,citecolor=green!35!black,linkcolor=red!60!black%
} 

\usepackage{xifthen}
\usepackage{paralist}

\usepackage{tikz}
\usetikzlibrary{decorations,arrows,arrows.meta,petri,topaths,backgrounds,shapes,positioning,fit,calc,decorations.pathreplacing,patterns,intersections}

\pgfdeclarelayer{bg}    
\pgfdeclarelayer{fg}    
\pgfsetlayers{bg,main,fg}  

\usepackage[textsize=tiny,textwidth=1.5cm,linecolor=green!70!black, backgroundcolor=green!10, bordercolor=black]{todonotes}

\usepackage[noend,ruled,linesnumbered]{algorithm2e}

\SetAlFnt{\small}
\SetAlCapFnt{\small}
\SetAlCapNameFnt{\small} 
\SetAlCapHSkip{0pt}

\SetKw{KwAnd}{and}
\SetKw{KwOr}{or}
\SetKw{KwNo}{no}
\SetKw{KwYes}{yes}
\SetKw{KwT}{true}
\SetKw{KwF}{false}
\SetKw{KwST}{s.t.}

\SetKwComment{Mycomment}{$\triangleright$\ }{}

\usepackage[capitalize]{cleveref}

\newtheorem{lemma}{Lemma}

\newtheorem{example}{Example}

\newtheorem{theorem}{Theorem}
\newtheorem{claim}{Claim}[theorem]

\newtheorem{corollary}{Corollary}
 \theoremstyle{definition}

\crefname{table}{Table}{Tables}
\crefname{figure}{Figure}{Figures}
\crefname{theorem}{Theorem}{Theorems}
\crefname{corollary}{Corollary}{Corollaries}
\crefname{observation}{Observation}{Observations}
\crefname{lemma}{Lemma}{Lemmas}
\crefname{example}{Example}{Examples}
\crefname{reduction}{Reduction}{Reductions}
\crefname{construction}{Construction}{Constructions}
\crefname{subsection}{Subsection}{Subsections}
\crefname{section}{Section}{Sections}
\crefname{claim}{Claim}{Claims}
\crefname{clm}{Claim}{Claims}
\crefname{algorithm}{Algorithm}{Algorithm}
\crefname{definition}{Definition}{Definitions}

\newcommand{\decprob}[3]{
   \begin{center}%
     \begin{itemize}[d]
       \item[\textsc{#1}]
       \item[\textbf{Input:}]  #2\\[0.2ex]
       \item[\textbf{Question:}]  #3
     \end{itemize}
  \end{center}
}



\newcommand{\scov}{\textsc{Set Cover}\xspace}

\newcommand{\manyone}{\textsc{Many-To-One Matching}\xspace}
\newcommand{\manyones}{\textsc{MM}\xspace}
\newcommand{\sumcol}[1]{{\color{red!50!black}#1}}
\newcommand{\maxcol}[1]{{\color{blue!50!black}#1}}
\newcommand{\minsummatching}[1][$\Pi$]{\textsc{Min{\sumcol{Sum}} Cap #1}\xspace}
\newcommand{\minmaxmatching}[1][$\Pi$]{\textsc{Min{\maxcol{Max}} Cap #1}\xspace}

\newcommand{\popexist}{\textsc{PM-cap}}

\newcommand{\sumpo}{\textsc{{\minsum}Par-p}}
\newcommand{\maxpo}{\textsc{{\minmax}Par-p}}
\newcommand{\sumpop}{\textsc{{\minsum}pop-p}}
\newcommand{\maxpop}{\textsc{{\minmax}pop-p}}

\newcommand{\minsum}{\textsc{Min\-\sumcol{Sum}}}
\newcommand{\minmax}{\textsc{Min\-\maxcol{Max}}}
\newcommand{\avgrank}{\textsc{AvgRank}}
\newcommand{\maxsm}{\textsc{CarSM}}
\newcommand{\lone}[1][$\pr$]{\ensuremath{|#1|_{\sumcol{1}}}}
\newcommand{\lmax}[1][$\pr$]{\ensuremath{|#1|_{\maxcol{\infty}}}}
\newcommand{\tdm}{\textsc{3dm}}

\newcommand{\vect}[1]{\ensuremath{\boldsymbol{#1}}}

\newcommand{\px}{\ensuremath{\vect{x}}}
\newcommand{\py}{\ensuremath{\vect{y}}}

\newcommand{\pr}{\ensuremath{\vect{r}}}

\newcommand{\quota}{\ensuremath{\boldsymbol{q}}}
\newcommand{\acset}{\ensuremath{\mathsf{A}}}

\newcommand{\sumcap}{\ensuremath{k^{+}}}
\newcommand{\maxcap}{\ensuremath{k^{\max}}}


\newcommand{\seqq}[1]{\ensuremath{\langle #1 \rangle}}

\newcommand{\myemph}[1]{{\color{purple!30!black}\emph{#1}}}

\definecolor{darkgreen}{rgb}{0.01,0.6,0.1}
\definecolor{darkblue}{rgb}{0,0,0.4}
\definecolor{winered}{rgb}{0.6,0.1,0.1}
\definecolor{doncolor}{RGB}{78,154,0}
\definecolor{falsecolor}{RGB}{0,55,255}
\definecolor{truecolor}{RGB}{164,0,0}
\definecolor{lightblue}{rgb}{0.527,0.805,0.977}





\tikzset{ttrue/.style={color=truecolor!50!white}}
\tikzset{ffalse/.style={color=falsecolor!50!white}}
\tikzset{dont/.style={color=doncolor!50!white}}
\tikzset{trueline/.style =   {line width= 3pt, ttrue}}
\tikzset{falseline/.style =   {line width= 3pt, ffalse}}
\tikzset{dontline/.style =   {line width= 3pt, dont}}

\begin{document}
\maketitle
\begin{abstract}
  We consider many-to-one matching problems, where one side corresponds to applicants who have preferences and the other side to houses who do not have preferences. 
  We consider two different types of this market: one, where the applicants have capacities, and one where the houses do.
 First, we answer an open question by  \cite{manlove2006popular} (partly solved \cite{paluch2014popular} for preferences with ties), that is, we show that deciding if a popular matching exists in the house allocation problem, where agents have capacities is NP-hard for previously studied versions of popularity.
  
  Then, we consider the other version, where the houses have capacities. We study how to optimally increase the capacities of the houses to obtain
  a matching satisfying multiple optimality criteria, like popularity, Pareto-optimality and perfectness.
  
  We consider two common optimality criteria, one aiming to minimize the sum of capacity increases of all houses (abbrv.\ as \minsum) and the other aiming to minimize the maximum capacity increase of any school (abbrv.\ as \minmax).
  We obtain a complete picture in terms of computational complexity and some algorithms.
  
\end{abstract}




\pagestyle{plain}


\section{Introduction}\label{sec:intro}

Two-sided matching markets with one-sided preferences arise in many applications such as house allocation, school choice and more. 
Such problems often correspond to allocating a set $H$ of heterogeneous indivisible goods
among a set $A$ of agents or to assigning a set $A$ of applicants to a set $H$ of houses/jobs. The agents/applicants are assumed to have some strict preferences over the goods/houses they find acceptable. This is known as the house allocation (\textsc{ha}) problem.
The house allocation problem is a widely studied problem in the literature especially nowadays, as it has been started to be used to model kidney exchange programs too. 

Such frameworks are also used in allocating students to schools, graduates to trainee positions, professors to offices, clients to servers,
etc.
In many of these applications, one or both sides of the market are allowed to have capacities. To unify terminology and follow the literature, we call one side of the market the applicants, and the other side the houses.

Our goal is to find a \emph{good} assignment between $A$ and $H$ without violating the capacity constraints.


As to what defines a good matching, the answer varies from application to application.
The simplest concept is to ensure that every applicant is matched (or saturated if they have capacities), we call such allocations \myemph{perfect}.
Note that having a perfect allocation is particularly important in school choice since every student should be admitted to some school. 
Another widely used concept is \myemph{Pareto-optimality}. A Pareto-optimal allocation is such that there is no other allocation, where all of the applicants are better off.
Of course, Pareto-optimality is very desirable concept, that most mechanisms aim to satisfy. 

Finally, an other well known and studied concept is
\myemph{popularity}. We say that a matching $M$ is \myemph{popular}, if there is no other matching $M'$, such that $M$ would lose in a head to head comparison with $M'$, that is, more agents would prefer $M'$ to $M$ than the other way around.
If the applicants have unit capacities, then comparing two allocations is straightforward, but otherwise the concept of popularity can vary by definition. In this  paper, we consider both the traditional version of the definition, introduced by \cite{brandl2016popular} and another natural one studied by \cite{paluch2014popular}, which compares two allocations in a lexicographic way.

It is not hard to see, that if an assignment $M$ is popular, then it also must be Pareto-optimal. Clearly, a matching $M'$, where each applicant (weakly) improves and at least one strictly improves dominates $M$ with respect to any natural definition. However, a Pareto-optimal or a popular matching may need not be perfect.

\smallskip
\noindent \textbf{Our contributions.}

If the applicants can have capacities, then we show that even finding a popular matching is NP-hard, answering an open question of \cite{manlove2006popular}. This holds even if the capacities of the applicants are at most 3 in the traditional version and at most 2 in the lexicographic one. 

Therefore, we mainly focus on the version, where only the houses are allowed to have capacities.

We study the question of how to optimally modify the capacities of the houses, such that the resulting instance admits a perfect and Pareto optimal, or a perfect and popular matching.  We consider two optimality criteria: one aiming to minimize the sum of the capacity changes and one aiming to minimize the maximum of the capacity changes. The case of Pareto-optimality is rather easy, as there are always maximum size Pareto-optimal matchings, so the trivial lower bound, which is minimum the number of applicants who are not matched in an allocation is tight. For popularity we obtain some nice and surprising results. We show that if we are only allowed to increase the capacities, then minimizing the sum of changes is polynomial-time solvable. However, if we are allowed to decrease the capacities too (which in some cases can be beneficial as we will illustrate on some examples), then the problem becomes NP-hard. Surprisingly, if the maximum of the capacity changes is minimized, then both versions become NP-hard and also hard to appriximate within a constant factor. This is in sharp contrast with recent results for two-sided preferences and stable and (one-sided)-popular matchings, where the \minmax\ case is solvable and the \minsum\ case is NP-hard \cite{chen2023optimal}.





\smallskip
\noindent \textbf{Related work.}

The popular matching problem was introduced first for markets with unit capacities and two-sided preferences by \cite{gardenfors1975popular}. Popularity has been extended to many-to-many instances later by \cite{brandl2016popular}. 

Popularity has also been explored in the house allocation model, first with unit capacities \cite{abraham2007popular}, then with arbitrary capacities on the houses \cite{manlove2006popular}, where they also posed as an open question whether a popular matching can be find in polynomial-time if the side with preferences has non-unit capacities.
This question has been partly solved by \cite{paluch2014popular}, but only for preferences which may contain ties. She also defined an extension of popularity to many-to-one instances, where agents compare matchings in a lexicographic fashion.

Achieving optimal allocation by capacity changes has been studied a lot in recent literature, although mostly for markets, where both sides have preferences.
Closest to our work is \cite{chen2023optimal}, where they also studied how to optimality increase the capacities to obtain matchings that satisfy multiple optimality criteria simultaneously. They studied markets with two-sided preferences, where one optimality notion was always stability. Our work investigates similar problems with markets, where only one side has preferences. 

There are further papers examining capacity variations.
\cite{LN2022} studied student-school allocations, where each seat at the schools have some cost, and the aim is to assign all students while minimizing the cost.
\cite{RLPC2014} propose a seat-extension mechanism to increase student's welfare.
\cite{KHHSUY2017quota} design a strategy-proof mechanism to address minimum and maximum quotas.
\cite{NV2018} study many-to-one matching with couples and propose algorithms to find a stable matching by perturbing the capacities.
\cite{BCLT2022capvariation,BCLRT2022capplanning} consider capacity variations to obtain a stable matching with minimum sum of the ranks of the matched schools (\avgrank) or maximum cardinality (\maxsm). 
\cite{AKI2022CapExp} propose some alternative method and conduct experiments for \minsum \avgrank. \cite{yahiro2020game} considered the problem where there are a fixed number of resources and the capacities of the schools are based on how much resource we allocate them. 

\smallskip
\noindent \textbf{Paper structure.}
We start by introducing the main concepts and definitions in \cref{sec:defi}. Then, in \cref{sec:appl-capac} we prove NP-hardness of deciding if a popular matching exists if agents have capacities. In \cref{sec:house-capac} we consider the case when only the houses have capacities. We investigate problems, where our goal is to guarantee the existance of desired solutions by perturbing the capacities as little as possible. In the case when we aim for a perfect and popular matching, we provide an algorithm, if the objective is to minimize the sum of changes and show NP-hardness if we want to minimize the maximum of the changes. To have Pareto-optimality and perferctness, we give a simple solution to both problems.
\section{Basic definitions and fundamentals} 
\label{sec:defi}

By $\mathds{N}$ we mean the set of all positive integers. 
Given an integer~$t$, let \myemph{$[t]$} denote the set~$\{1,\cdots,t\}$.
Given two integer vectors~$\px, \py$ of dimension~$t$, i.e., $\px,\py \in \mathds{Z}^{t}$,
we let \myemph{$\px + \py$} denote the new integer vector~$\pr$ with $\pr [z] = \px[z]+\py[z]$ for all $z\in [t]$,
and we write \myemph{$\px\le \py$} if for each index~$i\in [z]$ it holds that
$\px[i]\le \py[i]$; otherwise, we write \myemph{$\px \not\le \py$.}

If $S=\{s_1,\ldots,s_{|S|}\}$ is a set of indexed elements, then $\seqq{S}$ always denotes the increasing order~$s_1,\ldots,s_{|S|}$.

In our hardness reductions, we will use the NP-hardness of the following problem.

\decprob{\tdm }
{%
  Elements $A=\{ a_1,\dots,a_{\hat{n}}\}$, $B=\{ b_1,\dots,b_{\hat{n}}\}$, $C=\{ c_1,\dots,c_{\hat{n}}\}$ and a family $\mathcal{S}=\{ S_1,\dots,S_{3\hat{n}}\}$ of 3-element subsets of $A\cup B\cup C$ such that $|S\cap A|=|S\cap B|=|S\cap C|=1$ for each $S\in \mathcal{S}$ and for each element in $A\cup B\cup C$, there are exactly 3 sets $S_j\in \mathcal{S}$ covering it.
}
{%
 Is there a subset $\mathcal{S}'\subset \mathcal{S}$ (called an exact 3-cover), such that each element in $A\cup B\cup C$ is contained in exactly one set $S\in \mathcal{S}'$?
}
The NP-hardness of this special variant is proved in \cite{gonzalez1985clustering}. They showed NP-hardness remains for \textsc{x3c} (which is the more general version of \tdm, where the elements are not partitioned into three classes), even if each element is in exactly three sets. Their reduction also works in the case when the elements have to be partitioned into three classes, so the hardness of this \tdm\ variant also follows.

A well know NP-hard optimization problem is \scov.

\decprob{\scov }
{%
  Elements $E=\{ e_1,\dots,e_{\hat{n}}\}$ and a family $\mathcal{S}=\{ S_1,\dots,S_{\hat{m}} \}$ of subsets of $E$ and a number $k$.
}
{%
 Is there a subset $\mathcal{S}'\subset \mathcal{S}$ with $|\mathcal{S}'|\le k$, such that each element in $E$ is covered by at least one $S_j\in \mathcal{S}'$?
}

It is also known (\cite{feige1996threshold},\cite{dinur2014analytical}) that \scov\ is NP-hard to approximate within any constant factor $d$.

\subsection{Many-to-one matching}
The \manyone\ (in short, \manyones) problem has as input a set $A=\{a_1,\ldots, a_{n}\}$ of $n$ applicants and a set~$H=\{h_1,\ldots,h_{m}\}$ of $m$ houses. We assume that the acceptability relations are given by a bipartite graph $G=(A,H,E)$, where $(a,p)\in E$ if and only if $a$ finds house $p$ acceptable.

We destinguish two \manyone\ models, one where the applicants have capacities and one where the houses. 
In both variations, we are given strict preference orders $\succ_a$ for each applicant $a\in A$ over her acceptable houses, which we denote by $\acset (a)$. We also use $\succ_a$ to denote the induced strict order of $a$ over her incident edges. For an applicant or house $x\in A\cup H$, we denote by $E(x)$ the edges incident to $x$ in $G$. We are given a vector $\quota^A \in \mathbb{N}^n$, where $\quota^A [i]$ is the capacity of applicant $a_i$ and a vector $\quota^H \in \mathbb{N}^m$, where $\quota^H [i]$ denotes the capacity of house $h_i$.
In the first case, we assume $\quota^H[i]= 1$ $\forall i\in [m]$, while in the second case, $\quota^A[j]=1$, $\forall j\in [n]$. In the case of $\quota^A\equiv 1$, we omit $\quota^A$ from the input and write $\quota := \quota^H$. If $\quota^H\equiv 1$, then we omit $\quota^H$ from the input and write $\quota := \quota^A$

An \myemph{ (feasible) assignment (or simply matching)}~$M$ is a subset of the edges, such that $|M\cap E (a_i)|\le \quota^A[i]$ for each $i\in [n]$ and $|M\cap E (h_j)|\le \quota^H[j]$ for each $j\in [m]$. We use the notation $M(x)$ to refer to $M\cap E(x)$, i.e. the incident edges of $M$ to $x$.

We say that an applicant $a_i$ or a house $h_j$ is \emph{saturated} in $M$, if $|M\cap E (a_i)|= \quota^A[i]$ or $|M\cap E (h_j)|= \quota^H[j]$ holds respectively, otherwise she/it is \emph{unsaturated}.
We assume that each applicant~$a_i$ prefers to be assigned some acceptable houses rather than not being assigned at all.

Given a  \manyones\ instance~$I=(A,H, (\succ_a)_{a\in A}, \quota^A,\quota^H)$ and a matching~$M$ for $I$,
we say $M$ is \myemph{perfect} if every applicant is saturated in~$M$ (in the case when the applicants have capacity one, this just corresponds to them being assigned a house).

\smallskip
\noindent \textbf{Pareto optimality and Popularity}

Suppose each applicant has capacity one.
We say a matching~$M$ \myemph{Pareto-dominates} another matching~$M'$ if the following holds:
\begin{compactitem}[--]
  \item for each applicant~$a$ it holds that either $M(a)=M'(a)$ or $M(a)\succ_a M'(a)$, and
  \item at least one applicant~$a$ has $M(a)\succ_a M'(a)$.
\end{compactitem}
We call a matching~$M$ \myemph{Pareto-optimal} if it is not Pareto-dominated by other matchings. To define Pareto-optimality for applicants with capacities, we have to extend their preferences to subsets. We do not define this here, as we only cosider Pareto-optimality with unit capacity applicants.

Next we define popularity. Let us start with the case, when applicants have capacity one. In this model, we compare two matchings $M$ and $M'$ in the following way. Each applicant $a\in A$ cast a vote $vote_a(M,M')\in \{ -1,0,1\}$, such that $vote_a(M,M')=0$, if $M(a)=M'(a)$, $vote_a(M,M')=1$, if $M(a)\succ_aM'(a)$ and $vote_a(M,M')=-1$ otherwise.

If applicants have capacities, then we can allow applicant $a_i$ to cast a vote from $\{ -\quota^A [i],-\quota^A [i]+1,\dots,-1,0,1,\dots,\quota^A [i]\}$ to quatintify her improvement.
We define two different models. The first one is the one used by \cite{brandl2016popular}, who first extended the notion of popularity to many-to-many instances. This notion has also been used since by many works such as \cite{kamiyama2020popular}, \cite{csaji2022solving}. This is defined as follows. Let $S,T\subseteq E (a)$. We define a \emph{feasible pairing} $N$ as a matching (or pairing) between the elements of $S\setminus T$ and $T\setminus S$ that satisfies that $|N|=\min \{ |S\setminus T|,|T\setminus S| \}$ and that each element in $S\setminus T$ and $T\setminus S$ is only paired with at most one element from the other set. For a feasible pairing $N$, we define $vote_a(S,T,N)=|\{ xy\in N \mid x\in S\setminus T, y\in T\setminus S, x\succ_a y\}|-|\{ xy\in N \mid x\in S\setminus T, y\in T\setminus S, y\succ_a x\}|+|S\setminus T|-|T\setminus S|$. Then, $vote_a(M,M')$ is defined by taking the worst possible pairing $N$ for $M(a)$ and $M' (a)$ (from the point of view of $M$), that is $vote_a(M,M')=\min_{N} \{vote_a(M  (a),M' (a),N) \}$. 

The second notion is based on comparing matchings lexicographically. This has been also used in previous works such as \cite{paluch2014popular} and \cite{csaji2022solving}. Here, we first extend each applicants preferences to a strict preference order over the possible sets of incident edges (i.e. $2^{E(a)}$). Let $S,T\subseteq E(a)$. We say that $a$ \emph{lexicographically prefers} $S$ to $T$, denoted as $S\succ^{lex}_aT$, if the best element according to the order $\succ_a$ of the symmetric difference $S\triangle T=(S\setminus T)\cup (T\setminus S)$ is in $S$. Then, we compare two matchings $M$ and $M'$ in the following way. Each applicant $a\in A$ cast a vote $vote_a(M,M')\in \{ -1,0,1\}$, such that $vote_a(M,M')=0$, if $M(a)=M'(a)$, $vote_a(M,M')=1$, if $M(a)\succ^{lex}_aM' (a)$ and $vote_a(M,M')=-1$ otherwise.

Finally, we define $vote(M,M')$ to be the sum of votes of all applicants in both models, that is $\sum_{a\in A}vote_a(M,M')$. We call a macthing $M$ \emph{popular} (in both cases), if there is no other matching $M'$, such that $vote(M,M')<0$ holds. In that case, we say that $M'$ \emph{dominates} $M$. To distinguish the two different notions of popularity, we call the second one \emph{lexicographic popularity}.

\subsection{Studied problems}

First, we study the existence of popular matchings with capacitated applicants. 
\decprob{\popexist}
{%
  An instance~$I=(A,H,(\succ_a)_{a\in A},\quota =\quota^A,\quota^H\equiv 1 )$ of \manyones\ with capacitated applicants.%
}
{%
  Is there a a popular matching $M$? 
}

Of course, the complexity of \popexist\ may depend on which notion of popularity we use (popularity or lexicographic popularity). However, we show that for both traditional notions of popularity, \popexist\ is NP-hard. This is in sharp contrast with the result of \cite{brandl2016popular}, where they showed that there is always a popular matching with their definition of popularity and it is always possible to find one in polynomial time. The difference between our framework and theirs is that now only one side of the market has preferences and therefore only one side votes, whereas in their model, both sides have preferences and vote. 

Next, we consider the following decision problems, where $\Pi \in \{ $Pareto-optimal and Perfect, Popular and Perfect
$\}$.

\decprob{\minsummatching ({\normalfont\text{resp. }}\minmaxmatching)}
{%
  An instance~$I=(A,H,(\succ_a)_{a\in A},\quota^A\equiv 1,\quota = \quota^H )=(A,H,(\succ_a)_{a\in A},\quota )$ of \manyones\ and a capacity bound~$\sumcap\in \mathds{N}$ (resp. $\maxcap\in \mathds{N}$).%
}
{%
  Is there a capacity change vector~$\pr$ with $\lone\le \sumcap$ (resp.\ $\lmax \le \maxcap$) s.t.\ $I'=(A,H,(\succ_a)_{a\in A}, \quota+\pr)$ admits a $\Pi$~matching. 
}

We abbreviate the problems \minsummatching[Pareto-optimal and perfect], \minmaxmatching[Pareto-optimal and perfect ], \minsummatching[Popular and perfect], \minmaxmatching[Popular and perfect]
with \sumpo,  \maxpo, \sumpop\ and \maxpop\ respectively.

We consider two different types of \sumpop\ and \maxpop\ with respect to the allowed capacity change vectors. In the first case, we require that $\pr \ge 0$, that is each capacity can only be increased, while in the second one, capacities can be both increased and decreased. We call a capacity change vector witnessing a YES instance a \emph{good capacity change vector}. We further refer to it as a \emph{good capacity increase/decrease vector}, if $\pr \ge 0$ or $\pr \le 0$ holds respectively.

\section{Applicants have capacities}
\label{sec:appl-capac}
In this section we start by investigating the popular matching problem, where the applicants are allowed to have capacities. We show that the problem of deciding if a popular matching exists becomes NP-hard, even in very restricted settings and small constant capacities.

We suppose each house has unit capacity in this model (all our hardness results imply hardness for the case, where house capacities are also allowed). As only the applicants have capacities now, for simplicity we abbriviate $\quota^A[i]$ as $\quota [a_i]$. 

\subsection{Traditional popularity}

We start by showing that verification remains polynomial-time solvable, hence the problem is in NP. The proof follows similar arguments as the proof in \cite{kiraly2017finding}, where they show that in the many-to-many model, popularity can be verified in polynomial-time.
\begin{theorem}
Given a matching $M$, we can verify in polynomial-time whether $M$ is popular.
\end{theorem}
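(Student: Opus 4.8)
The plan is to reduce popularity verification to checking whether a suitably constructed auxiliary instance admits a matching whose total vote is negative, and to argue that such a ``best response'' matching can be found in polynomial time by an LP / flow-type argument. Concretely, fix the given matching $M$. For each applicant $a_i$ with capacity $\quota[a_i]$, we need to understand, for every possible set $T\subseteq E(a_i)$ with $|T|\le \quota[a_i]$ that $a_i$ could receive in a rival matching $M'$, the value $vote_{a_i}(M(a_i),T)$. The key structural observation (as in \cite{kiraly2017finding}) is that because the worst pairing $N$ is taken from $M$'s point of view, $vote_{a_i}(M(a_i),T)$ decomposes edge-by-edge: each edge of $T\cap M(a_i)$ contributes $0$, each edge $e\in T\setminus M(a_i)$ contributes $+1$ if $e$ is better than some currently-matched edge it can be paired against and $-1$ otherwise, and the count of such comparisons is governed purely by how $e$'s rank interleaves with the ranks of $M(a_i)$. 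This lets us assign to each edge $e=(a_i,h)\notin M$ a fixed integer weight $w(e)\in\{-1,+1\}$ (depending only on $M$ and on the position of $h$ in $\succ_{a_i}$ relative to $M(a_i)$), and to each edge $e\in M$ weight $0$, so that $vote(M,M') = \sum_{e\in M'} w(e)$ holds for \emph{every} rival matching $M'$ — provided $M'$ actually realizes the worst pairing, which one checks it always can when the weights are defined greedily by rank.

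Given the weight function $w$, the next step is: $M$ is popular iff $\min_{M'} \sum_{e\in M'} w(e) \ge 0$, where the minimum ranges over feasible matchings $M'$ of the capacitated bipartite graph $G$. This is a minimum-weight (degree-constrained) subgraph / $b$-matching problem: we seek a subset of edges respecting $|M'\cap E(a_i)|\le \quota[a_i]$ and $|M'\cap E(h_j)|\le 1$, minimizing total weight. Since we only want to know the sign of the optimum, and negative-weight edges are freely includable up to the capacity constraints, this is solvable in polynomial time — e.g. by setting up a min-cost flow instance with a source feeding each $a_i$ through an arc of capacity $\quota[a_i]$, arcs $a_i\to h_j$ of capacity $1$ and cost $w(a_i,h_j)$ (keeping only the non-positive ones, or allowing all and letting the flow ignore positive-cost arcs), and each $h_j\to$ sink of capacity $1$; the minimum-cost (not necessarily maximum) flow has value $\min_{M'}\sum_{e\in M'}w(e)$. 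One then outputs ``popular'' exactly when this value is $\ge 0$.

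The step I expect to be the main obstacle is \emph{justifying that the edge-weight decomposition is exact for the traditional (Brandl–Kamiyama) definition of $vote_a$}, i.e. that replacing the pessimal pairing $N$ by independent per-edge $\pm1$ contributions neither over- nor under-counts. The subtlety is that the worst pairing pairs each improved edge of $T\setminus M(a_i)$ against a currently held edge of $M(a_i)\setminus T$, and when $|T\setminus M(a_i)| \ne |M(a_i)\setminus T|$ the leftover edges contribute via the $|S\setminus T|-|T\setminus S|$ term; one must check that sorting both lists by $\succ_{a_i}$ and pairing rank-$k$ with rank-$k$ indeed achieves the minimum, and that the resulting contribution of edge $e$ depends only on $e$ and $M(a_i)$, not on the rest of $T$. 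This is a short exchange-argument lemma. Once it is in hand, and once one notes the analogous (in fact simpler) decomposition for lexicographic popularity — there the vote of $a_i$ is determined solely by whether the $\succ_{a_i}$-best edge of $M(a_i)\triangle M'(a_i)$ lies in $M'(a_i)$, which again can be encoded by per-edge weights after a little case analysis on which edges $M'$ picks — the min-cost-flow conclusion goes through verbatim, and the polynomial running time follows from standard min-cost flow bounds. I would present the traditional case in detail and remark that the lexicographic case is analogous.
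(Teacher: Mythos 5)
Your reduction hinges on the claim that there are per-edge weights $w(e)\in\{-1,+1\}$ on the non-matching edges (and $0$ on the edges of $M$) such that $vote(M,M')=\sum_{e\in M'}w(e)$ for every rival matching $M'$. No such weights exist, and this is not a subtlety that an exchange lemma can patch: the vote of an applicant depends on which edges of $M(a)$ she \emph{loses}, and a lost edge that is not replaced contributes $+1$ to $vote_a(M,M')$ through the term $|S\setminus T|-|T\setminus S|$, while it contributes nothing to $\sum_{e\in M'}w(e)$ because it simply does not appear in $M'$. A minimal counterexample already occurs with unit capacities: one house $h$ of capacity $1$ acceptable to two applicants $a,b$, and $M=\{(a,h)\}$. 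Here $M$ is popular, since $vote(M,\{(b,h)\})=+1-1=0$, but your scheme assigns $w(b,h)=-1$, so the min-cost flow finds a matching of total weight $-1$ and wrongly declares $M$ unpopular. With capacities the same problem recurs inside a single applicant: for $\quota[a]=2$, $h_1\succ_a h_2\succ_a h_3$ and $M(a)=\{h_1,h_2\}$, the vote against $T=\{h_3\}$ is $+2$ while against $T=\{h_1,h_3\}$ it is $+1$, so the ``contribution of $h_3$'' cannot be a constant independent of the rest of $T$ once the retained $M$-edges are pinned to weight $0$.

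The paper's proof is precisely the repair of this defect. It splits each applicant $a_i$ into $\quota[a_i]$ unit-capacity copies, assigns each copy at most one edge of $M$, and puts the weight $vote_{a_i}(M(a_i^l),(a_i,p_j))$ on the copy edge $(a_i^l,p_j)$; the pairing $N$ in the definition of the vote is then encoded by which copy a new edge attaches to, losses and gains are matched up along the alternating paths and cycles of the symmetric difference, and the unreplaced losses and free gains at path endpoints are counted by the correction term $mod(P)$. Popularity becomes the nonexistence of a negative-weight alternating cycle and of an alternating path $P$ with $w(P)+mod(P)<0$, which is checked with Bellman--Ford. If you insisted on a flow formulation you would have to build these copies and the endpoint corrections into the network, at which point you would have reproduced the paper's construction; as written, your argument does not establish the theorem.
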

\begin{proof}
    We create a graph $G^M=(A',H,E')$ from the acceptability graph $G$ and the matching $M$ as follows. For each node $a_i\in A$, we make $q_i=\quota [a_i]$ copies of $a_i$, called $a_i^1,\dots,a_i^{q_i}$. The nodes of $H$ remain the same. In this new graph, every capacity is 1. Then, for each edge $(a_i,p_j)\in E\setminus M$, we add edges $(a_i^1,p_j),\dots,(a_i^{q_i},p_j)$ to $E'$.
    
    We create a matching $\hat{M}$ in $G^M$ by adding further edges to $E'$ as follows.
    For each edge $(a_i,p_j)\in M$, we add an $(a_i^l,p_j)$ edge to $\hat{M}$, such that these edges form a matching in $G^M$. Define $M(a_i^l)$ to be the edge of $M$, whose copy is adjacent to $a_i^l$ in $\hat{M}$, if there is such an edge, otherwise define it to be $\emptyset$.

    Now, we define a weight function $w$ over $E'$. Let $(a_i^l,p_j)\in E'$. Then, $w(a_i^l,p_j)=vote_{a_i}(M(a_i^l),(a_i,p_j))$. 

    Let us call a cycle $C$ an \emph{alternating cycle}, if the edges of $C$ alternate between $M'$ and $E'\setminus \hat{M}$. We define \emph{alternating paths} analogously. Finally, for an alternating path $P$, let $mod(P)$ be the number of endpoints in $A'$ of $P$ covered by $P\cap \hat{M}$ minus the number of endpoints in $A'$ covered by $P\setminus \hat{M}$. That is, $mod(P)$ can be $-2,-1,0,1$ or $2$, depending on how many of the endpoints of $P$ are covered by $P\cap \hat{M}$ and $P\setminus \hat{M}$.

    \begin{claim}
    \label{claim:popverif}
      $M$ is popular, if and only if there is no alternating cycle $C$ such that $w(C)<0$ and there is no alternating path $P$ such that $w(P)+mod(P)<0.$
        
    \end{claim} 
    \begin{proof}
    Suppose first that $M$ is not popular and there is a matching $ \mu$ that dominates it. Fix a feasible pairing $N_i$ for each $a_i\in A$ between the adjacent edges of $M\setminus \mu$ and $\mu \setminus M$. Then, create a matching $\hat{\mu}$ in $G^M$ by adding the copy $(a_i^l,p_j)$ to $\hat{\mu}$ for each edge $(a_i,p_j)\in \mu$ such that $M(a_i^l)$ is paired with $(a_i,p_j)$ in $N_i$, if $(a_i,p_j)$ is paired with an edge in $N_i$, otherwise we choose a copy $(a_i^l,p_j)$ such that $a_i^l$ is not covered by $\hat{M}$ nor by a previously added edge of $\hat{\mu}$. (We can do this, as there are enough copies of the vertices). 

    Let $\hat{M}\triangle \hat{\mu}$ consist of alternating cycles $C_1,\dots,C_p$ and alternating paths $P_1,\dots,P_q$.
    Now, it is easy to see that $vote(M,\mu)<0$ is exactly equal to $\sum_{i=1}^pw(C_i)+\sum_{j=1}^q(w(P_j)+mod(P_j))$. Hence, we get there there is an alernating cycle or path satisfying the condition.
  
    For the other direction, suppose first that there is an alternating cycle $C$ with $w(C)<0$. As $C$ is an alternating cycle, it cannot happen that there is an edge $(a_i,p_j)\in E$, such that two copies of it are in $C$. This holds because only the edges in $E\setminus M$ have multiple copies, and only one of the adjacent edges of $p_j$ in $C$ is a copy of an edge in $E\setminus M$. Therefore, as no edge $(a_i,p_j)\in E$ has more than one copies in $C$, we can construct a matching $M'$ by exchanging the edges of $C\cap M$ and $C\setminus M$ with feasible pairings defined by the two adjacent edges for each copy of each applicant, such that the sum of votes is negative, so $M$ is not popular.

    If there is an alternating path $P$ with $w(P)+mod(P)<0$, then again, it is easy to see that each edge has at most one copy in $P$, thus we can create again a matching $M'$ that dominates $M$.

    \end{proof}
Using \cref{claim:popverif}, it is enough to decide whether such an alternating path or cycle exists. For this, first orient the edges of $G^M$ such that the edges in $\hat{M}$ point towards $H$, the other edges point towards $A'$.

Then, the existence of such an alternating cycle can be decided with the Bellman-Ford algorithm, applied to this edge-weighted directed graph $G^M$, as every directed cycle must be alternating. 

Suppose that no negative weight cycle exist. This means that $w$ is conservative, so the Bellman Ford algorithm can find shortest paths between any two nodes. Then, to decide whether an alternating path $P$ exist with $w(P)+mod(P)<0$, we can find shortest paths between all pairs of nodes, compute $mod(P)$ for them and check whether $w(P)+mod(P)<0$ holds.
    
\end{proof}

We procede by showing NP-hardness of the existence question.

\begin{theorem}
\label{thm:onesided-pop}
\popexist\ is NP-complete, even if each applicant has capacity 3, each applicant has three acceptable houses, and each house is acceptable to three applicants.
\end{theorem}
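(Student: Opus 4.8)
Membership in $\np$ is immediate from the verification theorem proved above, so the task is NP-hardness. I would reduce from the restricted variant of \tdm\ described in \cref{sec:defi}: there each $3$-set meets each of $A,B,C$ exactly once and each element lies in exactly three sets, so the element--set incidence graph is bipartite and $3$-regular, which is precisely the degree pattern the statement asks for. Given such an instance with sets $\mathcal S=\{S_1,\dots,S_{3\hat{n}}\}$, I would create one applicant $s_j$ with $\quota[s_j]=3$ for every $S_j\in\mathcal S$ and one unit-capacity house $h_x$ for every element $x\in A\cup B\cup C$, and set $\acset(s_j)=\{h_x : x\in S_j\}$. Then every applicant has exactly three acceptable houses and, because each element lies in exactly three sets, every house is acceptable to exactly three applicants. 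The remaining design freedom --- the strict lists $\succ_{s_j}$ over the three houses of $S_j$, together with (if the bare instance turns out to be too permissive) a fixed constant-size gadget attached to each house that preserves all three degree/capacity bounds --- will be used only to kill the ``spurious'' popular matchings in the converse direction; the intended choice is to fix a global linear order of the elements and let each $s_j$ inherit it on $\acset(s_j)$.

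\textbf{Soundness (exact cover $\Rightarrow$ popular matching).} If $\mathcal S'\subseteq\mathcal S$ is an exact $3$-cover, let $M$ assign every applicant $s_j$ with $S_j\in\mathcal S'$ all three of its houses and leave every other applicant unassigned; since $\mathcal S'$ partitions the elements, each house is used exactly once. For an arbitrary matching $M'$: if $S_j\in\mathcal S'$ then $s_j$ already holds all of its acceptable houses in $M$, so $vote_{s_j}(M,M')=3-|M'(s_j)|\ge 0$; if $S_j\notin\mathcal S'$ then $s_j$ holds none in $M$, so $vote_{s_j}(M,M')=-|M'(s_j)|$. Summing over $j$ gives $vote(M,M')=3|\mathcal S'|-|M'|=3\hat{n}-|M'|\ge 0$, so no $M'$ dominates $M$ and $M$ is popular. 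Note this step does not use the preferences, which is exactly why the preferences (or the extra gadget) are only needed for the converse.

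\textbf{Completeness and the main obstacle.} Assume the \tdm\ instance has no exact cover and, for a contradiction, that $M$ is popular. First, $M$ must use every house: if $h_x$ were unassigned then no set through $x$ is saturated (a saturated set occupies all of its houses, in particular $h_x$), so some $s_j$ with $x\in S_j$ is unsaturated and $M+(s_j,h_x)$ is a feasible matching that makes $s_j$ strictly better off and nobody worse --- contradicting popularity. Hence all $3\hat{n}$ houses are used; the saturated applicants correspond to pairwise-disjoint sets whose union, since there is no exact cover, misses some element, so strictly fewer than $\hat{n}$ applicants are saturated and the assignment is ``spread out''. Using the characterization from the verification proof, I would then exhibit an alternating cycle (or path) of negative weight in $G^M$ --- concretely, a cyclic chain of applicants, each currently holding one of its houses and preferring the house held by the next applicant in the chain, i.e.\ an \emph{improving rotation}, whose rotation beats $M$. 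The crux of the whole reduction is to guarantee that such a negative alternating structure exists in \emph{every} full-house assignment of a no-instance --- equivalently, that only an exact cover admits a rotation-free assignment of all houses; this is precisely where the global ordering of the elements (and, if necessary, the per-house gadget) must be engineered so that any packing that fails to be a partition is forced to contain a cyclic chain of mutual improvements. I expect this ``always find an improving rotation'' argument, together with re-checking that the soundness matching stays popular once any gadgets are added, to be the most delicate part; the claimed bounds (applicant capacity $3$, three acceptable houses each, three applicants per house) then come for free, since every quantity in the construction is already one of the structural constants of the \tdm\ instance.
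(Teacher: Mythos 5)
Your construction skeleton (one capacity-$3$ applicant per $3$-set, one unit house per element) and your soundness direction are exactly the paper's, and that half is correct. But the completeness direction is a genuine gap, and you say so yourself: you reduce it to the claim that every full-house assignment of a no-instance contains an ``improving rotation,'' and you leave open both how the preference lists force this and whether an extra gadget is needed. That claim is precisely the content of the theorem, so as written the reduction is not established. Moreover, your proposed preference design (a single global order of the elements inherited by every $s_j$) does not obviously support any counting argument, because under it the ``worst house'' of an applicant can lie in any of the three element classes.

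The missing idea is to align the preferences with the tripartition: each $s_j$ ranks its $C$-element house first, its $B$-element house second, and its $A$-element house last. Then, once you know every house is allocated (your maximality step, which is fine), the $\hat{n}$ houses of class $A$ are each some applicant's unique worst house, and no applicant can hold two of them, so \emph{exactly} $\hat{n}$ applicants hold their worst house. If any such applicant $s_{j_1}$ is not fully saturated, a local exchange beats $M$: give $s_{j_1}$ an acceptable house she lacks (taking it from its current holder, who votes $-1$), and pass her $A$-house to another applicant that finds it acceptable --- that applicant is necessarily unsaturated, so two applicants vote $+1$ against one $-1$. Hence all $\hat{n}$ of these applicants are saturated, and their sets form an exact $3$-cover. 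No negative-cycle machinery and no per-house gadget are needed; the entire completeness argument is this one two-for-one swap plus the pigeonhole count on $A$-houses.
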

\begin{proof}
We reduce from \tdm. Let $I$ be an instance of \tdm. 
Make an instance $I'$ of \popexist\ as follows.

\begin{itemize}
    \item [--] For each set $S_j\in \mathcal{S}$, we add an applicant $s_j$ with capacity 3,
    \item [--] For each element $a_i\in A$, we add a house $a_i$, for each element $b_i\in B$, we add a house $b_i$ and for each element $c_i\in C$, we add a house $c_i$.
\end{itemize}
The acceptability relations correspond to the incidence relations between the 3-sets and the elements.
The preferences of the $s_j$ applicants are defined in a way such that they rank their $c$-type house first, their $b$-type house second and their $a$-type house third.

We claim that there is an exact cover $\mathcal{S}'$ in $I$ if and only if $I'$ admits a popular matching $M$.

First suppose that there is an exact 3-cover $\mathcal{S}'\subset \mathcal{S}$ in $I$ consisting of 3-sets $S_{j_1},\dots,S_{j_{\hat{n}}}$. Define a matching $M$ of $I'$ by giving the applicants $s_{j_1},\dots,s_{j_{\hat{n}}}$ all their acceptable houses. The other applicants receive no houses. 

We claim $M$ is popular. Indeed, the applicants corresponding to the 3-sets of $\mathcal{S}'$ receive all their acceptable houses, so their vote for a matching $M'$ is $-k$ if they lose $k$ of their houses in $M'$. For all other applicants, they receive no house initially, so their vote is $k$, if they get $k$ houses in $M'$. As all houses were assigned originally, it follows that the sum of votes can only be at most 0 for any matching $M'$ of $I'$.

For the other direction suppose there is a popular matching $M$. First, observe that each house must be allocated, as $M$ must be maximal (if a house is not allocated, then its adjacent applicants cannot be saturated). As each $s_j$ applicant has only one acceptable $a$-type house, it follows that there is exactly $\hat{n}$ applicants $s_{j_1},\dots,s_{j_{\hat{n}}}$ who obtain their worst houses. Suppose that there is one of them, say $s_{j_1}$, who does not obtain all her acceptable houses. Then, make a matching $M'$ from $M$, by giving $s_{j_1}$ a house she does not have in $M$ (by taking it from someone else), and giving her $a$-type house away to an applicant who considers it acceptable (this applicant was unsaturated in $M$). Then, two applicants vote with $+1$ for $M'$ and at most one applicant (who looses the house we give $s_{j_1}$) votes with $-1$. This contradicts the fact that $M$ is popular. Hence, all of $s_{j_1},\dots,s_{j_{\hat{n}}}$ obtain all their houses, which is only possible if $S_{j_1},\dots,S_{j_{\hat{n}}}$ form an exact 3-cover in $I$. 
\end{proof}

\subsection{Lexicographic popularity}

Next, we move on to lexicograpic preferences.
It was already shown by \cite{paluch2014popular} that verifying whether a matching is lexicographic popular can be done in polynomial-time, hence the problem is in NP. She also proved hardness, but only if ties are allowed in the preference list. Here, we strengthen her result by showing that the problem remains NP-hard even with strict preferences.

\begin{theorem}
\label{thm:onesided-lex}
\popexist\ for lexicographic popularity is NP-hard, even if each applicant has capacity 2, each applicant has at most 3 acceptable houses and each house is acceptable to at most 3 applicants.
\end{theorem}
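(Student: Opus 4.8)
The plan is to adapt the reduction from \tdm\ used in \cref{thm:onesided-pop}, but to replace the single capacity\nobreakdash-$3$ applicant per set by a constant-size \emph{set gadget} built from capacity\nobreakdash-$2$ applicants and private houses. A straightforward rescaling of the earlier construction fails because lexicographic votes lie in $\{-1,0,1\}$ with no weighting: if, in the candidate matching, some applicant held two of its element-houses, it could hand them off to two needy applicants for a total change of $1-2=-1$ in $vote(M,M')$, so $M$ would not be popular. Hence the gadget must be \emph{popularly rigid}: its only swap-free configurations should be a ``selected'' one and an ``unselected'' one, and no popular matching may ever concentrate several coveted houses at one applicant.

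Concretely, I would keep, for each element $x\in A\cup B\cup C$, a house $h_x$, and for each set $S_j=\{a_i,b_k,c_l\}$ introduce a gadget containing three coordinate-applicants (one each for $A,B,C$), all of capacity $2$, together with a few gadget-private houses acceptable only inside the gadget, with preference lists arranged so that: (i) the $A$-applicant of $S_j$ finds exactly $h_{a_i}$ and two gadget-private houses acceptable (degree $3$), symmetrically for $B$ and $C$; (ii) consequently $h_x$ is acceptable to exactly the three applicants coming from the three sets containing $x$ (degree $3$); and (iii) in the ``selected'' configuration the three coordinate-applicants saturate $h_{a_i},h_{b_k},h_{c_l}$ (and fill out with private houses), while in the ``unselected'' configuration they use only private houses, and \emph{every other} configuration of a gadget admits a strictly improving local swap. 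Pinning down the private-house wiring so that all of (i)--(iii) hold at once, within the capacity\nobreakdash-$2$ and degree\nobreakdash-$3$ budget, is the first technical hurdle.

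For completeness, given an exact $3$-cover $\mathcal{S}'$ I would let $M$ put every gadget of $\mathcal{S}'$ in its selected configuration and all others in the unselected one. Then every $h_x$ is saturated, so there is no augmenting improvement; every applicant sits in a locally optimal state, so no single swap helps; and for a general competitor $M'$ I would argue that $M\triangle M'$ decomposes, along the element-houses, into gadget-local alternating fragments, each contributing a nonnegative amount to $vote(M,M')$ -- in particular, a chain in which some applicant grabs an element-house from a selected gadget forces a cascade inside its own gadget whose net contribution is $\ge 0$. Carrying out this global accounting with no weighting to spare is the main obstacle: it is essentially the assertion that the gadget is not merely locally but genuinely popularly rigid, and it is where the precise lists and the number of private houses have to be fixed. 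If a clean direct argument proves elusive, one can instead organize the case analysis around the polynomial-time verification structure for lexicographic popularity of \cite{paluch2014popular}.

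For soundness, from a popular $M$ I would first note that $M$ must saturate every $h_x$ (otherwise an adjacent applicant strictly improves), then that each gadget lies in its selected or unselected configuration (otherwise a local swap improves $M$), and finally that the selected gadgets cover every element exactly once: each $h_x$ is coveted by three applicants but held by one, and two selected gadgets sharing an element would force one of them into a forbidden partial state. Hence $\{\,S_j : \text{gadget } j \text{ is selected}\,\}$ is an exact $3$-cover of $I$, completing the reduction. Throughout I would verify the promised restrictions: all capacities equal to $2$, at most $3$ acceptable houses per applicant, and at most $3$ applicants per house.
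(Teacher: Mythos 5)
Your high-level plan is the right one and in fact coincides with the paper's: replace each capacity-$3$ set-applicant of \cref{thm:onesided-pop} by three capacity-$2$ applicants, each adjacent to one element-house and to gadget-private houses, and your diagnosis of why the capacity-$3$ construction breaks under lexicographic voting (a saturated applicant losing two element-houses still casts only a single $+1$) is exactly the right motivation. However, the proposal stops at the point where the actual proof begins: you never exhibit the gadget, and you explicitly flag both the private-house wiring and the global popularity accounting as unresolved obstacles. Since the entire content of the theorem is that such a popularly rigid gadget exists within the capacity-$2$/degree-$3$ budget, this is a genuine gap rather than a routine omission.

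For the record, the paper's gadget is small and its structure is the crux: for each set $S_j=\{a_i,b_k,c_l\}$ it introduces three private houses $h_j^1,h_j^2,h_j^3$ arranged in a cycle, and applicant $s_j^\ell$ has preference list $h_j^\ell \succ (\text{its element-house}) \succ h_j^{\ell+1}$ with indices taken modulo $3$. Each private house is thus shared by exactly two consecutive applicants of the same gadget, one ranking it first and the other last. This asymmetry drives the rigidity argument: if $s_j^\ell$ fails to hold $h_j^\ell$, then $h_j^\ell$ must be held by $s_j^{\ell-1}$ as her worst house, so $s_j^{\ell-1}$ is missing one of her top two choices, and handing $h_j^\ell$ back while $s_j^{\ell-1}$ upgrades yields two applicants preferring the new matching against one preferring the old; a similar cyclic swap shows that either all three applicants of a gadget hold their element-houses or none do. The completeness direction is then a direct injective count (each improving applicant must take either an element-house from a saturated applicant or some $h_j^{\ell}$ from the applicant who ranks it first, in either case producing a distinct disimprover), with no need for the decomposition of $M\triangle M'$ into alternating fragments or for the verification machinery of \cite{paluch2014popular} that you propose as a fallback. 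Until you commit to a concrete wiring and verify these two swap arguments, the reduction is not established.
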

\begin{proof}
We reduce from \tdm.  Let $I$ be an instance of \tdm.. 
Make an instance $I'$ as follows.

\begin{itemize}
    \item [--] For each set $S_j\in \mathcal{S}$, we add three applicants $s_j^1,s_j^2,s_j^3$ with capacity 2, along with three houses $h_j^1,h_j^2,h_j^3$.
    \item [--] For each element $a_i\in A$, we add a house $a_i$, for each element $b_i\in B$, we add a house $b_i$ and for each element $c_i\in C$, we add a house $c_i$.
\end{itemize}
The acceptability relations are the following. For each set $S_j=\{ a_i,b_k,c_l\}$, we have the edges $(s_j^1,a_i),(s_j^2,b_k),(s_j^3,c_l)$ in the acceptability graph $G^{acc}$. Furthermore, for each $j$ and each $\ell \in [3]$ we also have the edges $(s_j^{\ell},h_j^{\ell}),(s_j^{\ell},h_j^{\ell +1})$ in $G^{acc}$, where $\ell +1$ is taken modulo 3.
The preferences of the $s_j^{\ell}$ applicants are defined in a way such that they rank their $h_j^{\ell}$ house first, their corresponding $a_i$/$b_k$/$c_l$ house second and their $h_j^{\ell +1}$ house third. 

We claim that there is an exact cover in $I$ if and only if $I'$ admits a popular matching $M$.

First suppose that there is an exact 3-cover $\mathcal{S}'$ in $I$ consisting of 3-sets $S_{j_1},\dots,S_{j_{\hat{n}}}$. Define a matching $M$ of $I'$ by giving the applicants $s_{j_1}^{\ell},\dots,s_{j_{\hat{n}}}^{\ell}$, $\ell \in [3]$ their best two houses (which are distinct, as $\mathcal{S}'$ is an exact 3-cover and first houses are distinct). The other $s_j^{\ell}$ applicants only receive their best house $h_j^{\ell}$.  

We claim $M$ is popular. As the applicants $s_{j_1}^{\ell},\dots,s_{j_{\hat{n}}}^{\ell}$, $\ell \in [3]$ are saturated with their best two houses, if they receive a different set of houses in a matching $M'$, they vote with $-1$. The other applicants also vote with $-1$, whenever they loose their best (and only) $h_j^{\ell}$ house in $M'$. Also, only these applicants can improve. Let $M'$ be any matching and let $p$ denote the number of improving applicants. Then, all these $p$ applicants must receive an additional house in $M'$, while keeping their best one. As these applicants cannot take any house from the set $h_{j_1}^{\ell},\dots,h_{j_{\hat{n}}}^{\ell}$, $\ell \in [3]$ (they do not find it acceptable), it follows that at least $p$ applicants loose a house (any applicant with two houses have a house from  $\{ h_{j_1}^{\ell},\dots,h_{j_{\hat{n}}}^{\ell}$, $\ell \in [3]\}$. By our observations, this implies that the sum of votes cannot exceed 0. Hence, $M$ is popular.

For the other direction suppose there is a popular matching $M$. 
\begin{claim}
\label{claim:onesided-lex}
For any popular matching $M$,
\begin{enumerate}
    \item each $h_j^{\ell}$ house must be allocated to $s_j^{\ell}$,
    \item each house must be allocated,
    \item if applicant $s_j^{\ell}$ obtains two houses, than so does $s_j^{\ell -1}$ (addition taken modulo $3$)
\end{enumerate}
\end{claim}
\begin{proof}
Suppose $s_j^{\ell}$ does not get $h_j^{\ell}$. Then, $h_j^{\ell}$ is allocated to $s_j^{\ell -1}$, otherwise $s_j^{\ell}$ could switch to it without making anyone worse off, contradicting popularity. Hence, $s_j^{\ell -1}$ can only have one other house, so she does not obtain one of his best two houses. Therefore, if $s_j^{\ell -1}$ gives house $h_j^{\ell}$ to $s_j^{\ell}$, $s_j^{\ell}$ drops her houses in $M$, and $s_j^{\ell -1}$ takes a house she did not have in $M$, which is better than $h_j^{\ell}$ (maybe from some other applicant), then we obtain a matching $M'$ such that two applicants prefer $M'$ to $M$ and at most one prefer $M$ to $M'$, contradiction. 

Next, assume that there is a house that remains free. By (1), it can only be an $a/b/c$-type house. However, in $M$, no such houses can remain free, because otherwise any applicant that considers it acceptable would be unsaturated, hence she could improve without making anyone else worse off.

Finally, suppose that $s_j^{\ell}$ obtains her corresponding $a/b/c$-type house, but $s_j^{\ell -1}$ does not. Then, $s_j^{\ell -1}$ is unsaturated. Let $M'$ be the matching we obtain by giving $h_j^{\ell}$ to $s_j^{\ell -1}$ and the other house of $s_j^{\ell}$ to any applicant who considers it acceptable. Then, two applicants improve and only one applicant gets worse in $M'$, contradicting that $M$ is popular.
\end{proof}

By \cref{claim:onesided-lex}, each house that corresponds to elements is assigned and for each set $S_j$ either all 3 of its corresponding element houses are matched to the applicants corresponding to $S_j$ or none of them are. Hence, $M$ defines a an exact 3-cover in $I$.
\end{proof}



\section{Capacitated house allocation }
\label{sec:house-capac}
In this section we look at the optimization problems related to capacity modifications.

\subsection{Pareto optimal and Perfect matchings}

First, we investigate the complexity of \sumpo\ and \maxpo. It turns out that these problems are rather easy.

\begin{theorem}
\sumpo\ and \maxpo\ can be solved in polynomial time, either if decreases are allowed or not.
\end{theorem}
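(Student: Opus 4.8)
The plan is to reduce both problems to a single structural fact: a \manyones\ instance $(A,H,(\succ_a)_a,\quota')$ admits a perfect Pareto-optimal matching if and only if it admits a perfect matching at all. The ``only if'' direction is trivial. For the ``if'' direction, I would argue that among all perfect matchings one can always pick a Pareto-optimal one: take any perfect matching $M$ and, as long as it is Pareto-dominated by some $M'$, note that every dominating $M'$ is itself perfect (each applicant is weakly better off, hence still matched), and each Pareto-improvement strictly increases the potential $\sum_{a\in A}(\text{rank of }M(a)\text{ in }\succ_a$, counted so that better houses have smaller rank$)$, which is bounded; so the process terminates at a perfect Pareto-optimal matching. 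One can also make this constructive via serial dictatorship run on top of an already-perfect matching, but the existence argument suffices. Consequently, for a candidate capacity vector $\quota+\pr$, the instance $I'$ admits a $\Pi$-matching (with $\Pi$ = Pareto-optimal and perfect) if and only if $I'$ has a perfect matching, i.e.\ a matching saturating every applicant, which by Hall/König can be checked by a single bipartite $b$-matching (max-flow) computation.

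Given this, both optimization problems reduce to: find a capacity change vector $\pr$ of minimum $\ell_1$-norm (resp.\ minimum $\ell_\infty$-norm), subject to the bound $\lone\le\sumcap$ (resp.\ $\lmax\le\maxcap$), such that $G=(A,H,E)$ with capacities $\quota^H+\pr$ on $H$ (and unit capacities on $A$) admits an $A$-perfect matching. Here I would first observe that decreasing any house capacity can never help create a perfect matching, so w.l.o.g.\ $\pr\ge 0$ whether or not decreases are permitted; this immediately handles the ``decreases allowed'' case. For \sumpo: by Hall's theorem, an $A$-perfect matching with capacities $\pq$ on $H$ exists iff for every $S\subseteq A$, $\sum_{h\in N(S)}\pq[h]\ge |S|$; the minimum total extra capacity needed is exactly $\max_{S\subseteq A}\bigl(|S|-\sum_{h\in N(S)}\quota^H[h]\bigr)^+$, which equals $|A|-\nu$ where $\nu$ is the maximum size of a feasible matching in the original instance (computable by max-flow); moreover this deficiency can always be repaired by adding all the missing units to a single house reachable from a maximum-deficiency set, or more simply to any house that some unmatched applicant finds acceptable. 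So the \sumpo\ answer is YES iff $|A|-\nu\le\sumcap$.

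For \maxpo\ I would binary-search (or just iterate) over the threshold $t\in\{0,1,\dots,\maxcap\}$: for a fixed $t$, the best we can do is raise \emph{every} house capacity to $\quota^H[h]+t$ (raising them all is never worse than raising a subset, for the $\ell_\infty$ objective), and then test for an $A$-perfect matching by one max-flow computation; the minimum feasible $t$ is monotone, so this gives a polynomial algorithm, and \maxpo\ is a YES instance iff this minimum $t$ is $\le\maxcap$. The main thing to get right — and the only place requiring care — is the reduction step asserting that ``$\Pi$-matching exists'' collapses to ``perfect matching exists'': one must verify that Pareto-domination preserves perfectness (immediate from the definition, since $M(a)\succ_a M'(a)$ or $M(a)=M'(a)$ forces $M'(a)\neq\emptyset$ whenever $M(a)\neq\emptyset$, using the standing assumption that every applicant prefers any acceptable house to being unassigned) and that the domination partial order is well-founded on the finite set of perfect matchings. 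Everything after that is standard bipartite $b$-matching / max-flow, hence polynomial.
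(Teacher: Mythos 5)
Your proposal is correct and follows essentially the same route as the paper: both reduce ``a perfect Pareto-optimal matching exists'' to ``an applicant-perfect matching exists'' (the paper justifies this via the existence of a maximum-size Pareto-optimal matching, obtained as a maximum-weight maximum-size matching, while you use a well-foundedness/iterative-improvement argument on Pareto domination starting from a perfect matching --- both work), conclude that decreases never help, and then compute the deficiency $|A|-\nu$ for \sumpo\ and iterate over the uniform increase threshold $t$ for \maxpo. One small correction: your parenthetical claim that the deficiency can be repaired by adding all missing units to a single house is false in general (two deficient sets of applicants may have disjoint neighborhoods, so no single house fixes both); the per-applicant repair you also describe --- give each applicant left unmatched by a maximum matching one extra seat at some acceptable house --- is the construction that works and is the one the paper uses.
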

\begin{proof}
We start with \sumpo.
It is widely known that there is always a Pareto optimal matching of maximum size (weigh each edge according to the preferences, then a maximum weight maximum size matching is Pareto-optimal). Hence, decreases are never beneficial and the problem reduces to find a minimal sum capacity increase vector $\pr$, such that $\quota +\pr$ admits an applicant-perfect matching. 
Let $x$ be the size of a maximum size matching and let $n$ be the number of applicants. Clearly, the optimum value is at least $n-x$. It is easy to find a capacity increase vector, with $|\pr |_1=n-x$ by just assigning each applicant that is left out in a maximum size matching $M$ to a house they find acceptable and modify the capacities accordingly. As there is always a maximum size Pareto optimal matching, this new instance will admit an applicant perfect and Pareto optimal matching too. 

For \maxpo\, we can iterate through $\maxcap =1,2,\dots, n$ and find the smallest one, where the instance admits an applicant perfect matching in polynomial time. Then, we can find one that is also Pareto-optimal.
\end{proof}

\subsection{Popular and perfect matchings}

\subsection{Structural results}
Now we move on to the more interesting problems, \sumpop\ and \maxpop. We start by showing, that if only increases are allowed, than \sumpop\ can be solved in polynomial-time. Let us denote the capacity $\quota^H[j]$ of $h_j$ by $\quota [h_j]$.

\cite{manlove2006popular} provided a nice characterization for popular matchings, that we are going to utilize now. We formulate it in a slightly different way, to better suit our needs. Let $G=(A,H,(\succ_a)_{a\in A},\quota )$ be an instance of the capacitated house allocation problem. For each applicant $a\in A$ define $f(a)$ to be the house that is most preferred by $a$. For a house $h$, we call an applicant $a$ an \emph{admirer of $h$}, if $h=f(a)$.
Next, define $s(a)$ to be either $f(a)$, whenever $h=f(a)$ has at most as many admirers as its capacity $\quota [h]$, or the most preferred house by $a$ among those that have strictly less admirers then their capacity, if there is any such acceptable house, otherwise let $s(a)=\emptyset$ (we do not create a last resort house for such applicants as Manlove and Sng \cite{manlove2006popular}, and so we do not require all applicants to be matched).  

Make a graph $G'$ by adding the edges $(a,f(a))$ and $(a,s(a))$ for each applicant $a\in A$. In the case when $f(a)=s(a)$ this corresponds to adding two parallel edges. We add both these edges to make the characterization more simple to formulate.
\begin{theorem}[\cite{manlove2006popular} reformulated]
\label{thm:popperf}
    A matching $M$ is popular if and only if the followng conditions hold:
    \begin{itemize}
        \item every edge of $M$ is from $G'$,
        \item every $a\in A$ who has degree two in $G'$ is matched in $M$,
        \item every house $h\in H$ that can be saturated with admirers of $h$ is saturated with admirers of $h$ only and
        \item  if a house can have all its admirers, than it has all its admirers matched to it.  
    \end{itemize}
\end{theorem}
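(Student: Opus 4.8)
The plan is to prove the two directions separately. The ``if'' direction has a short, self-contained charging argument; the ``only if'' direction I would prove by contradiction, turning each possible violation of a condition into a matching that beats $M$ in a head-to-head vote. (The statement is the capacitated characterisation of popular matchings from \cite{manlove2006popular}, here only reformulated; this is how I would re-prove it from scratch.)

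For the ``if'' direction, suppose $M$ satisfies all four conditions and, for contradiction, that some $M'$ dominates $M$, i.e.\ $vote(M',M)>0$. Call $a$ a \emph{gainer} if $M'(a)\succ_a M(a)$ and a \emph{loser} if $M(a)\succ_a M'(a)$, so $vote(M',M)$ is the number of gainers minus the number of losers, and it suffices to show there are at least as many losers as gainers. Call a house $h$ \emph{tight} if it has at least $\quota[h]$ admirers. The key first step is that \emph{for every gainer $a$, the house $M'(a)$ is tight}: a gainer has $M(a)\neq f(a)$; if $a$ is unmatched in $M$ then by the second condition $a$ has $G'$-degree one, so $s(a)=\emptyset$, which by definition of $s(\cdot)$ means every house acceptable to $a$ — in particular $M'(a)$ — is tight; otherwise $a$ is matched, the first condition forces $M(a)\in\{f(a),s(a)\}$, hence $M(a)=s(a)\neq f(a)$ and $f(a)$ is tight, and if $M'(a)\neq f(a)$ then $M'(a)\succ_a s(a)$ forces $M'(a)$ to be tight since $s(a)$ is by definition the most preferred acceptable house that is \emph{not} tight. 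Now I use the third condition: for every tight house $h$, the set $M(h)$ consists of exactly $\quota[h]$ admirers of $h$. Fix a tight house $h$ and let $G_h=\{a : a$ is a gainer and $M'(a)=h\}$ and $L_h=\{b : M(b)=h$ and $b$ is a loser$\}$. Every $b$ with $M(b)=h$ is an admirer, so $M'(b)\neq h$ implies $M'(b)\prec_b f(b)=h$, i.e.\ $b$ is a loser; hence $L_h=M(h)\setminus M'(h)$. Also each $a\in G_h$ has $M(a)\neq h$ (otherwise $a$ would not gain by getting $h$), so $G_h\subseteq M'(h)\setminus M(h)$. Since $|M'(h)|\le\quota[h]=|M(h)|$ we get $|G_h|\le|M'(h)\setminus M(h)|\le|M(h)\setminus M'(h)|=|L_h|$. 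The sets $G_h$ partition the gainers and the sets $L_h$ are pairwise disjoint subsets of the losers, so summing over all tight $h$ shows the number of gainers is at most the number of losers, contradicting $vote(M',M)>0$.

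For the ``only if'' direction, assume $M$ is popular. I would first establish the third and fourth conditions, which together say that the sub-assignment of $M$ consisting of first-choice edges has maximum possible size: if some tight house $h$ is unsaturated, or is saturated but holds a non-admirer, or if an admirer of a non-tight house is not assigned to it, then there is an admirer $a$ of the relevant house who can be moved onto a free slot of it (freeing one slot elsewhere along the way), a change that strictly improves the vote — contradiction. Granting the third and fourth conditions, I would then handle the first and second: if some applicant $a$ is matched below $\{f(a),s(a)\}$, or is unmatched while having $G'$-degree two, then $a$ strictly prefers $f(a)$ or $s(a)$ to its current state, and that house, if full, must contain an occupant who is not one of its admirers (here the third/fourth conditions pin down exactly which applicants sit on first-choice houses); evicting such an occupant and re-routing it along a \emph{promotion path} — each step moving some applicant toward its own first- or second-choice house — yields after finitely many steps a matching with at least one additional gainer and no new loser, again contradicting popularity.

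The step I expect to be the main obstacle is this ``only if'' direction, specifically making the promotion-path argument rigorous in the capacitated setting: one must argue that the path can be chosen to terminate without cycling, and that each displaced applicant can be re-settled without ending up worse off, which is precisely where the already-established third and fourth conditions are needed. The ``if'' direction, by contrast, reduces to the short cardinality inequality above once one isolates the notion of a tight house.
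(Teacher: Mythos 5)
The paper itself gives no proof of this theorem: it is stated as a reformulation of the characterization of Manlove and Sng, with only a one-line remark that the first three conditions imply the fourth. So there is no in-paper argument to compare against, and I judge your proposal on its own merits. Your ``if'' direction is correct and complete: showing that every gainer is sent by $M'$ to a \emph{tight} house (one with at least $\quota[h]$ admirers), and then using condition 3 to get $|M(h)|=\quota[h]$ with admirers only, yields $|G_h|\le|M'(h)\setminus M(h)|\le|M(h)\setminus M'(h)|=|L_h|$ and hence $vote(M',M)\le 0$. As a bonus, your argument uses only conditions 1--3, consistent with the paper's remark that condition 4 is redundant.

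The ``only if'' direction has a concrete gap. For condition 3 you propose to move an admirer ``onto a free slot'' of the offending tight house $h$; but in the sub-case where $h$ is saturated yet contains a non-admirer $b$, there is no free slot, and the naive exchange (admirer $a$ in, $b$ out) produces one gainer and one loser, i.e.\ a net vote of $0$, which does not contradict popularity. The missing idea is that $b$, being a non-admirer of $h$, strictly prefers $f(b)\neq h$ to $M(b)=h$; so you additionally send $b$ to $f(b)$, evicting an occupant there if $f(b)$ is full. That gives two gainers ($a$ and $b$) against at most one loser, violating popularity. The same two-step move disposes of condition 4 and of the cases of conditions 1 and 2 where an applicant should be promoted to $s(a)$; the remaining case of condition 1 (an applicant matched to a house strictly between $f(a)$ and $s(a)$, or to any house when $s(a)=\emptyset$) needs no exchange at all, since such a house is tight and the already-established condition 3 forbids a non-admirer from occupying it. Relatedly, the obstacle you anticipate --- promotion paths that might cycle --- does not arise: the path never needs more than two edges, because the second displaced applicant is simply left worse off and charged as the single loser. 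Your claim that the re-routing creates ``no new loser'' is therefore false in general, but harmless once you count two gainers against that one loser.
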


We note that the first three conditions also imply the fourth one. This is because the admirers of such a house $h$ have degree two and both their edges are going to $h$ in $G'$.
This implies the following.
\begin{corollary}
\label{cor:popperf}
    There is a applicant-perfect popular matching if and only if there is a matching $M$ in $G'$ that matches all applicants and  every house $h\in H$ that can be saturated with admirers of $h$ is saturated with admirers of $h$ only.  
\end{corollary}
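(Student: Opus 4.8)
The plan is to derive the statement directly from the reformulated characterization in Theorem \ref{thm:popperf}, together with the remark immediately following it. First I would observe that the "only if" direction is essentially immediate: if $M$ is an applicant-perfect popular matching, then by Theorem \ref{thm:popperf} every edge of $M$ lies in $G'$, and the third bullet of that theorem says that every house which can be saturated with admirers is saturated with admirers only. Applicant-perfectness of $M$ is exactly the requirement that $M$ matches all applicants. So $M$ is a matching in $G'$ with the two stated properties, and nothing further is needed.

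For the "if" direction, I would take a matching $M$ in $G'$ that matches all applicants and saturates every saturable-by-admirers house with admirers only, and verify the four conditions of Theorem \ref{thm:popperf}. The first condition holds because $M \subseteq G'$ by assumption. The second condition (every degree-two applicant of $G'$ is matched) holds because $M$ matches \emph{all} applicants, which is stronger. The third condition is assumed verbatim. The fourth condition follows from the first three by the remark after Theorem \ref{thm:popperf}: the admirers of a house $h$ that can have all its admirers have degree two in $G'$ with both edges going to $h$, so once they are all matched (which they are, since $M$ is applicant-perfect) they must be matched to $h$. Hence $M$ is popular, and it is applicant-perfect by hypothesis.

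I do not expect any genuine obstacle here; the corollary is a routine repackaging of Theorem \ref{thm:popperf}. The only point requiring a line of care is making explicit that "matches all applicants" subsumes the second condition of Theorem \ref{thm:popperf} and that, combined with the degree-two structure of admirers, it also yields the fourth condition via the post-theorem remark — so only the first and third conditions need to be carried as hypotheses, which is precisely why the corollary can drop the other two.
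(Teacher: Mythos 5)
Your proposal is correct and follows exactly the route the paper intends: the paper gives no separate proof, presenting the corollary as an immediate consequence of \cref{thm:popperf} together with the remark that the first three conditions imply the fourth, which is precisely the reduction you spell out. Your only added content is the (correct) routine verification that applicant-perfectness subsumes the second condition and, via the degree-two structure of admirers, yields the fourth.
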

Now we show the key lemma for our algorithm.
\begin{lemma}
\label{lemma:popperf}
Let $I$ be an instance of the capacitated house allocation problem. Let $x$ be the size of a maximum size matching $M$ in $G'$ that satisfies that every house $h\in H$ that can be saturated with admirers of $h$ is saturated with admirers of $h$ only and every house that has at most as many admirers as its capacity has all its admirers matched to it. Then, if we increase the capacity of a house by one, the maximum size of such a matching can only increase by one. 
\end{lemma}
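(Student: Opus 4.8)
The plan is to argue via an exchange/augmentation argument on the structure of $G'$. Let $I$ be the instance and let $I^+$ be the instance obtained by increasing the capacity of one house, say $h^*$, by one. First I would observe that the graph $G'$ is essentially unchanged: the functions $f(a)$ are completely unaffected by a capacity increase, and the functions $s(a)$ can only change in that $h^*$ might now qualify as a ``not-yet-full'' house for more applicants, or some applicant's $s(a)$ that was previously $\emptyset$ becomes defined. In particular, the edge set of $G'$ only grows, and it grows only by edges incident to $h^*$ (edges of the form $(a, h^*)$ where previously $s(a) \neq h^*$). Call these new edges $\Delta E$. Let $M$ be an optimal matching for $I$ of size $x$ satisfying the two conditions in the lemma statement (all saturable-by-admirers houses saturated by admirers only; all houses with few admirers having all admirers matched), and let $M^+$ be such an optimal matching for $I^+$, of size $x^+$. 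The goal is to show $x^+ \le x+1$.

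The main step is to consider $M \triangle M^+$ and analyze its components. Since $M$ and $M^+$ are both (partial) matchings in the slightly-larger graph $G'_{I^+}$, the symmetric difference decomposes into paths and cycles that alternate between $M$ and $M^+$. An alternating cycle and an alternating path with equal numbers of $M$- and $M^+$-edges contribute nothing to the size difference, so all of the surplus $x^+ - x$ comes from augmenting paths, i.e.\ paths whose edges are more in $M^+$ than in $M$. The key claim to establish is that there can be at most one such augmenting path, and it augments by exactly one edge. The reason I expect this to work: in $I$, every vertex except possibly $h^*$ has the \emph{same} capacity in both instances, and $M$ was chosen to be \emph{maximum} among matchings in $G'_I$ satisfying the stated conditions. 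If $P$ is an augmenting path for $M$ in $G'_{I^+}$ that does not use the extra capacity of $h^*$ — meaning $P$ does not end at $h^*$ as an endpoint absorbing the new slot, and $P$ uses only edges of $G'_I$ — then $M \triangle P$ would be a larger matching already feasible in $I$; I would need to check it still satisfies the two side-conditions, which should follow because augmenting along $G'$-edges toward houses only helps the ``all admirers matched'' condition and the admirers-only condition is about $f$-edges which are structurally forced. Hence any genuinely new augmenting structure must either traverse an edge of $\Delta E$ (all incident to $h^*$) or terminate at $h^*$ exploiting its incremented capacity, and in either case $h^*$ can participate in only one additional edge, so at most one augmenting path survives and it has net gain $1$.

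The careful bookkeeping I anticipate being the real obstacle is twofold. First, the side-conditions: I must verify that when I do the exchange $M \triangle (\text{non-}h^* \text{ augmenting path})$ inside $I$, the resulting larger matching still has every saturable-by-admirers house saturated by admirers only, and every small-admirer house fully matched to its admirers — the contradiction with maximality of $x$ only bites if the larger matching is \emph{admissible}. Here I would lean on \cref{thm:popperf}/\cref{cor:popperf}: the admirer-saturation conditions are ``local'' to each house and its admirers, and augmenting paths that end at an unsaturated house can be rerouted or the house in question is not one with the admirer constraint active. Second, I need to rule out the scenario where multiple short augmenting paths each route through $h^*$ using the single new slot — but this is immediate, since a matching can put at most $\quota[h^*]+1$ edges at $h^*$, so only one unit of extra flow passes through $h^*$, bounding the total net augmentation by one. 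Combining: $x^+ \le x + 1$, and since trivially $x^+ \ge x$ (any admissible matching for $I$ is admissible for $I^+$), the size increases by at most one, as claimed.
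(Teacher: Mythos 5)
There is a genuine gap, and it sits exactly where the paper says the difficulty lies. Your foundational claim that ``the edge set of $G'$ only grows, and it grows only by edges incident to $h^*$'' is false. The functions $s(a)$ do not merely become defined for more applicants: when the increase brings the capacity of $h^*$ up to its number of admirers, every admirer $a$ of $h^*$ has $s(a)$ jump from some other house $h'$ to $f(a)=h^*$, so the edge $(a,h')$ is \emph{deleted} from $G'$; similarly, when the increase pushes the capacity strictly above the number of admirers, a non-admirer $a$ may have $s(a)$ switch from $h'$ to $h^*$, again deleting $(a,h')$. The paper explicitly warns that ``the statement is not trivial, because \ldots\ the edges of the graph $G'$ may change,'' and its entire proof is a case analysis of these two deletion-plus-addition events. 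With deletions present, $M$ need not be a matching in the new graph, and your decomposition of $M\triangle M^+$ inside ``the slightly-larger graph'' does not get off the ground as described. A second, independent problem is your count of augmenting structures at $h^*$: the one extra unit of capacity bounds how many trails can \emph{end} at $h^*$, but several alternating trails can \emph{pass through} $h^*$ in a balanced way, each using one new $M^+$-edge and one old $M$-edge there, so ``$h^*$ can participate in only one additional edge'' does not by itself cap the net augmentation at one. Ruling this out requires the side conditions (e.g., that all admirers of $h^*$ must be matched to it, leaving room for at most one non-admirer), which you never invoke. Finally, the step you yourself flag as ``the real obstacle'' --- that $M\triangle P$ still satisfies the two admissibility conditions --- is left as a hope; an augmenting path can perfectly well swap an admirer out of a house that must be saturated with admirers only and route a non-admirer in, so admissibility genuinely can fail and the contradiction with the maximality of $x$ does not follow without further argument.

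For contrast, the paper's proof avoids symmetric differences entirely. It supposes a matching $M'$ of size at least $x+2$ satisfying the conditions under the new capacities, and in each of the two cases deletes a single well-chosen edge of $M'$ at $h$ (any admirer edge in the first case; the unique non-admirer edge, if present, in the second), then checks directly that the result has size at least $x+1$, is feasible for the original capacities, uses only edges of the original $G'$, and still satisfies both conditions --- contradicting the maximality of $x$. If you want to salvage an augmenting-path argument, you would need to (i) handle the deleted edges explicitly, (ii) prove the admissibility of the exchanged matching rather than assume it, and (iii) use the admirer-saturation conditions to bound the number of augmenting trails meeting $h^*$; at that point the paper's one-edge-deletion argument is considerably shorter.
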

\begin{proof}
    Clearly, a matching $M$ satisfying the two conditions always exits (we can find one by greedily assigning applicants to first houses, if there are free places left there), so $x>0$.

    We first note that the statement is not trivial, because if we increase the capacity of a house by one, then the edges of the graph $G'$ may change too.  Observe, that $f(a)$ can never change, only $s(a)$ can.

    Let the house whose capacity got increased by one be $h$.
    Clearly, if the edges of $G'$ do not change then the statement is trivial.

    Suppose the edges of $G'$ change and denote the new graph by $G''$. This can arise for two reasons. 
    
    Firstly, it can arise because $h$ had a strictly smaller capacity than the number of admirers that it has, but after the increase $h$ has capacity equaling the number of admirers it has. In this case the edges of $G'$ change in a way such that for every admirer $a$ of $h$, we delete the edge $(a,s(a))$ (if it existed) and add a parallel edge $(a,f(a))$. This is because for those applicants, $s(a)$ changes to $f(a)$. The adjacent edges of other applicants do not change. Suppose there is a maximum size matching $M'$ that has size at least $x+2$ and satisfies the conditions (with respect to the new capacities) of the lemma. We can assume that $M'$ does not contain any edges of the form $(a,s(a))$, whenever $s(a)=f(a)$, as we can take $(a,f(a))$ instead. Then, we delete an edge of $M'$ that is adjacent to $h$ (there must be at least one). Then, the obtained matching $M''$ is feasible with the original capacites too in $G''$, and has size at least $x+1>|M|=x$. Also, each edge of $M''$ is in $G'$ too (only edges adjacent to admirers of $h$ got deleted, and each new edge already had a parallel edge in $G'$). Finally, $M''$ satisfies the conditions of the Lemma, because $M'$ satisfied it (as $h$ still can be saturated with admirers only after the increase). Hence, we got a contradiction, as $M''$ would be a larger matching in $G'$ than $M$ satisfying the conditions.

    Secondly, the change in $G'$ could arise because house $h$ had capacity being equal to the number of its admirers, which got increased by one. In this case, there may be some applicants $a\in A$, such that $s(a)$ was worse for $a$ then $h$ and for these applicants, $s(a)$ changes to $h$. For the other applicants $s(a)$ does not change. Suppose that there is a matching $M'$ in $G''$ that satisfies the conditions (with respect to the new capacities) and has size at least $x+2$. Because it satiesfies the conditions, at most one applicant $a\in A$ can be matched to $h$, for whom $h$ is not $f(a)$, hence at most one new edge may be in $M'$. Let $M''$ be the matching we obtain from $M'$ by deleting the edge from $h$ to a non-admirer applicant, if there is any, otherwise $M''=M'$. Then, $|M''|\ge x+1>|M|$. Also, $M''$ is feasible with respect to the original capacities, every edge of $M''$ is in $G'$ too and $M''$ satisfies the conditions (because $M'$ satisfied it and all admirers of $h$ are matched to $h$ in $M'$), which is a contradiction again. 
\end{proof}

\subsection{\minsum\ version}

Now we are ready to show that \sumpop\ can be solved if only increases are allowed.
\begin{theorem}
    \sumpop\ is polynomial-time solvable, if only increases are allowed.
\end{theorem}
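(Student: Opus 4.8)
The plan is to reduce the problem to a search over a single integer parameter, the total capacity budget $\sumcap$, and for each fixed budget decide feasibility by a matroid-intersection / matching argument based on \cref{cor:popperf} and \cref{lemma:popperf}. First I would recall from \cref{cor:popperf} that an applicant-perfect popular matching exists in an instance $I'=(A,H,(\succ_a)_a,\quota+\pr)$ if and only if the auxiliary graph $G'$ (built from $f$ and $s$ with respect to $\quota+\pr$) admits a matching saturating every applicant in $A$ and saturating, with admirers only, every house that can be filled by its admirers. So \sumpop\ with increases only asks for the cheapest $\pr\ge 0$ making such a matching exist.

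The key structural input is \cref{lemma:popperf}: adding one unit of capacity to any single house increases by at most one the size of a maximum ``good'' matching in the associated $G'$. Iterating this, after spending a total budget of $t$ (distributed arbitrarily over houses), the maximum good-matching size grows by at most $t$ above its value $x_0$ at $\pr=0$. Hence no budget smaller than $n-x_0$ can ever work, where $n=|A|$. I would then argue the matching converse: there is a specific increase vector of total size exactly $n-x_0$ that does work. Concretely, take a maximum good matching $M_0$ in the original $G'$, of size $x_0$; for each of the $n-x_0$ unmatched applicants $a$, we want to route $a$ to some acceptable house while preserving the popularity conditions. The natural move is to increase $\quota[f(a)]$ by one and match $a$ to $f(a)$: since $f(a)$ can now accommodate one more admirer, assigning $a$ there keeps all of the popularity conditions of \cref{thm:popperf} intact (every newly matched $a$ gets its own top choice, which is automatically consistent with ``saturated with admirers only''), and it does not disturb the other houses' admirer-saturation status because we only raised a capacity. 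Doing this for all unmatched applicants, possibly over several rounds because new unmatched applicants may appear as $G'$ changes — this is exactly the place where \cref{lemma:popperf} is invoked again, to guarantee that each unit of increase buys exactly one unit of matching progress and that the process terminates after $n-x_0$ units — yields an applicant-perfect popular matching with $|\pr|_1=n-x_0$.

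Combining the two directions, the optimum of \sumpop\ with increases only is precisely $n-x_0$, and $x_0$, the size of a maximum good matching in $G'$, is computable in polynomial time: $G'$ has $O(n)$ edges, the admirer-saturation constraints are of the form ``house $h$ must receive $\min(\quota[h],\#\text{admirers}(h))$ admirer-edges,'' and a maximum matching subject to these lower-bound/structural constraints can be found by a standard augmenting-path computation (or by reducing to an ordinary bipartite matching after pre-committing forced admirer edges). The algorithm thus is: build $G'$, compute $x_0$, output $\sumcap\ge n-x_0$ as the answer (and, if desired, construct the witnessing $\pr$ and matching by the rounding procedure above).

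The main obstacle I anticipate is the converse direction's bookkeeping: when we raise a house's capacity, the graph $G'$ can change (some $s(a)$ edges move), so the set of unmatched applicants is not static, and one must make sure that the greedy augmentation of unmatched applicants does not get stuck or require more than $n-x_0$ total units. This is controlled entirely by \cref{lemma:popperf} — monotonicity of the max good-matching size under unit increases — so the argument is clean once that lemma is in hand; the remaining care is just to verify that matching a freshly unmatched applicant to its current $f(a)$ after a $+1$ on $f(a)$ never violates the ``admirers only'' condition for any other house, which holds because that condition for a house $h'$ depends only on $\quota[h']$ and the admirer set of $h'$, neither of which we touched.
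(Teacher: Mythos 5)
Your overall strategy is the same as the paper's: the lower bound of $n-x_0$ via \cref{lemma:popperf} together with \cref{cor:popperf}, and the matching upper bound obtained by sending every applicant left unmatched by a maximum ``good'' matching to her first-choice house $f(a)$ and raising that house's capacity by one. (The paper does this in a single pass; no multi-round process is needed, since every added edge is an $(a,f(a))$ edge and these never conflict, so the optimum is exactly $n-x_0$.)

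The gap is in your verification that the resulting matching is popular in the \emph{modified} instance. You reduce this to checking the ``saturated with admirers only'' condition for houses other than $f(a)$ and dismiss it because ``neither $\quota[h']$ nor the admirer set of $h'$ is touched.'' But the condition of \cref{thm:popperf} that is actually at risk is the first one: every edge of the final matching must lie in the auxiliary graph built from the \emph{new} capacities. Raising $\quota[f(a)]$ can in principle change $s(b)$ for other applicants $b$ (a house whose capacity comes to exceed its number of admirers becomes an $s$-candidate for everyone who ranks it above their current $s$-house), and if some $s(b)$ improved, the phase-one edge $(b,s(b))$ would no longer be present in the new auxiliary graph and the matching would violate popularity. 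You cannot get this from \cref{lemma:popperf}, which only bounds the growth of the maximum good matching under a unit increase and plays no role in certifying the constructed matching; your appeal to it for ``exactly one unit of progress per unit of budget'' conflates the lower-bound argument with the construction. The missing (short) argument, which the paper supplies, is that each house whose capacity is raised receives exactly as many new admirers as new seats, so its admirer count never drops below its capacity; hence no $s(\cdot)$ value improves, the old edges survive, and the remaining conditions of \cref{thm:popperf} follow. With that observation added, your proof coincides with the paper's.
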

\begin{proof}
    We provide a polynomial-time algorithm to solve it. The algorithm proceeds as follows. In the first phase, we start by creating the graph $G'$ and finding a maximum size matching $M'$ that satisfies the conditions of \cref{lemma:popperf}. We can do this as follows. First, for each house $h$ that can admit all its admirers, we match all its admirers to it and fix these edges. Then, for each house $h$ that can be saturated with admirers only, we weigh the edges that connect $h$ to an admirer with weight 1, and all other edges with weight 0 and find a maximum weight maximum size (capacitated) matching (i.e. a matching with maximum weight among the maximum size matchings). It is easy to see that with this weight function, the maximum weight matchings are exactly the ones that saturate each such house $h$ with admirers only.

    In the second phase, for each applicant left alone in $M'$, we match her to her best house and increase the capacities accordingly.

    Let $|M'|=x.$
    By \cref{cor:popperf} and \cref{lemma:popperf}, it holds that the optimum is at least $|A|-x$. It is also clear that the algorithm increases the capacities by exactly this amount.

    Hence we only need to show that the output $M$ is popular. For this, we show that all four conditions from \cref{thm:popperf} hold.
    Let $G''$ be the auxilary graph we obtain with the final capacities. 

    First of all, every edge of $M$ is included in $G''$, because if an applicant is matched to $s(a)\ne f(a)$, then they were matched in the first phase, and for any house $h\succ_a s(a)$ its capacity did not increase to more than the number of admirers it has (such a house must had as many admirers as its capacity matched to it in $M'$ and only received more admirers in the second step), hence $s(a)$ did not change in the second phase.
    Clearly, every applicant with degree two in $G''$ is matched. If a house can be saturated with admirers only, then it could be saturated with admirers only with respect to the original capacities too, so the third condition remains true (we only assigned more admirers to each house in the second step). Finally, if a house $h$ can have all its admirers matched to it, then either it could accept all its admirers originally, that are still all at $h$ or $h$ increased capacity such that now it can accept all its admirers. In that case, there was an admirer not at $h$, so $h$ must have been saturated with admirers only, and because we increased $h$'s capacity to be able to accomodate all admirers, by the construction, all of them must have been matched to $h$, as desired. This concludes that $M$ is popular.
\end{proof}

Sadly, if decreases are also allowed, then \sumpop\ becomes NP-hard.

First we provide an example to illustrate that allowing decreases may be greatly beneficial. 
\begin{example}
We have three houses $h_1$ with capacity 1 and $h_2$ with capacity 2 and $h_3$ with capacity $n+1$.
We have applicants $a_1,\dots,a_{n+2}$ with preference $h_1\succ h_2\succ h_3$ and an applicant $b$ with preference $h_2\succ h_1$. Suppose that we can only increase the capacities. Then, at least $n$ increase is necessary: otherwise $h_1$ and $h_2$ have combined capacity at most $n+2$. Therefore, one $a_i$, say $a_3$ must be at $h_3$. One $a_i$, say $a_2$ must be at $h_2$ too, otherwise the one at $h_3$ would envy a free spot which cannot happen in a popular matching. Similarly, at least one applicant must be at $h_1$. Hence, $a_3$ could go to $h_2$ in $a_2$'s place, and $a_2$ could go to $h_1$ by replacing an applicant there, so two applicants would improve and only one would disimprove, contradiction. 

However, if we can decrease the capacities, then decreasing $h_2$'s capacity to 1 is enough: $M=\{ (a_1,h_1),(b,h_2),(a_i,h_3)\mid i=2,\dots,n+2\}$ is popular and perfect as it satisfies the properties of \cref{thm:popperf}.
\end{example}

We proceed to our hardness result.

\begin{theorem}
\sumpop\ is NP complete if both increases and decreases are allowed.
\end{theorem}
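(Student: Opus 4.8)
The plan is to reduce from \scov\ (or directly from \tdm, but \scov\ is likely more natural here since the \minsum\ objective counts a number of "changes" and \scov\ counts a number of chosen sets). Given a \scov\ instance with elements $E=\{e_1,\dots,e_{\hat{n}}\}$, sets $\mathcal{S}=\{S_1,\dots,S_{\hat{m}}\}$ and bound $k$, I would build a capacitated house allocation instance together with a sum-bound $\sumcap$ that is a function of $k$ and the instance parameters, so that a good capacity change vector of $\ell_1$-norm at most $\sumcap$ exists if and only if there is a set cover of size at most $k$. The key design idea, suggested by the preceding example, is that \emph{decreasing} a capacity is the mechanism by which "choosing a set" is encoded: each set $S_j$ gets a gadget with a house $h_j$ whose default capacity is $1$ larger than what a popular-and-perfect matching wants unless $S_j$ is "selected", and selecting $S_j$ (decreasing $h_j$) costs one unit of the budget. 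Conversely, each element $e_i$ gets an applicant (or small applicant gadget) that, via \cref{thm:popperf} / \cref{cor:popperf}, can only be made both saturated and popularity-compliant if at least one of the set-houses covering it has been decreased. Because decreases and increases both count positively toward $\ell_1$, I must make sure the only cost-efficient way to fix all element gadgets is through the decrease operations, and that spurious increases never help; this is where the structural characterization of \cref{thm:popperf} does the heavy lifting, since it pins down exactly which matchings are popular.

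\textbf{Key steps, in order.} First, I would describe the gadget: for each set $S_j$, a house $h_j$ and a few dummy applicants whose preferences force $h_j$ to be saturated-with-admirers in any popular matching, set up so that the "number of admirers" of $h_j$ equals its default capacity minus $1$, meaning a perfect popular matching is impossible on this gadget at the default capacity but becomes possible after one decrease (or, symmetrically, after a carefully chosen increase elsewhere that I will rule out). For each element $e_i$, an applicant $x_i$ whose first choice $f(x_i)$ is some saturated house and whose $s(x_i)$ is $\emptyset$ unless one of the houses $\{h_j : e_i \in S_j\}$ has a free spot — which happens exactly when that $h_j$'s capacity was decreased to match its admirer count, at which point a later admirer can be displaced... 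I would tune this so that $x_i$ has degree two in $G'$ (hence must be matched by \cref{thm:popperf}) only when covered. Second, I would set $\sumcap := k$ (plus possibly a fixed additive constant absorbed into the construction) and prove the forward direction: from a set cover $\mathcal{S}'$ of size $\le k$, decrease exactly the houses $h_j$ for $S_j\in\mathcal{S}'$ and exhibit an applicant-perfect matching in the resulting $G''$ satisfying the conditions of \cref{cor:popperf}. Third, the backward direction: given any good capacity change vector $\pr$ with $|\pr|_1\le k$, argue that $\pr$ must be supported only on the set-houses and must be $-1$ on each (no increases help, larger decreases are wasteful or infeasible), and that the set of decreased houses forms a set cover — otherwise some uncovered element applicant $x_i$ has degree two in $G''$ but no available slot, contradicting perfectness-plus-popularity.

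\textbf{Main obstacle.} The delicate part is controlling the "increase" option: a priori one could try to satisfy an element applicant $x_i$ not by freeing a slot in a set-house but by increasing the capacity of $f(x_i)$ or of some other house so that $s$-edges shift favorably, and I must ensure the construction makes every such alternative at least as expensive as simply decreasing the relevant set-houses — ideally strictly more expensive, or infeasible because it breaks the popularity conditions on another gadget (e.g. creating an unsaturated admirer somewhere, which \cref{thm:popperf} forbids once that house "can be saturated with admirers only"). Getting these interlocking constraints exactly right — so that the $\ell_1$-optimal good vector is forced to be the indicator of a minimum set cover, with no slack for cheating via increases — is where essentially all the work lies; the rest is a routine verification against \cref{thm:popperf}. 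I would also need to double-check NP membership, which is immediate since the earlier theorem gives polynomial-time verification of popularity and a guessed $\pr$ is polynomially bounded (its $\ell_1$-norm is at most $\sumcap$, given in unary-or-binary but bounded by the instance size in the reduction).
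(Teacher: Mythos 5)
Your proposal is an outline rather than a proof, and the part you defer (``getting these interlocking constraints exactly right \dots is where essentially all the work lies'') is precisely where it breaks. The central mechanism you describe --- decreasing the capacity of a set-house $h_j$ so that an element applicant $x_i$ with $e_i\in S_j$ becomes matchable --- is backwards with respect to how capacity changes interact with the Manlove--Sng characterization (\cref{thm:popperf}). A non-admirer $x_i$ can only be popularly assigned to $h_j$ if $s(x_i)=h_j$, which requires $h_j$ to have \emph{strictly fewer admirers than its capacity}; decreasing $\quota^H$ at $h_j$ can only destroy this property, never create it, and it certainly never creates a free slot. So a decrease can never be the operation that lets an uncovered element applicant become matched --- that would require an \emph{increase}, and if increases were the operative mechanism your reduction would collapse onto the increase-only version of \sumpop, which the paper proves is polynomial-time solvable. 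The actual leverage of decreases (already visible in the paper's motivating example) is the opposite: they remove envied free slots, i.e.\ they shift $s(a)$ \emph{away} from a house so that applicants can be popularly parked at worse houses.

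The paper's reduction (from \tdm, with budget $2\hat{n}$) exploits exactly this reversed polarity: a set $S_j$ that is \emph{not} selected into the exact cover forces a decrease of $p_j$'s capacity by one, so that the three applicants $s_j^1,s_j^2,s_j^3$ can sit at their third-choice house $t_j$ without envying a vacancy at $p_j$, whereas a selected set costs nothing because its applicants all reach their first-choice element houses. A counting argument (the $3\hat{n}-l$ forced decreases plus the element-house increases $\eta_1$ must fit into $2\hat{n}$) then pins down $\eta_1=0$ and $l=\hat{n}$, yielding an exact cover. If you want to salvage a \scov-style reduction with budget $k$, you would need to re-engineer the gadgets so that the budget pays for the \emph{non}-selected sets (or find some other accounting), and you would still need an explicit counting step ruling out mixed increase/decrease strategies; neither is present in your sketch. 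The NP-membership remark is fine.
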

\begin{proof}
As shown by \cite{manlove2006popular}, it is possible to decide whether there is a complete popular matching given some fixed capacities, implying that the problem is in NP.

We provide a reduction from \tdm. Let $I$ be an instance of \tdm. Let $A\cup B\cup C=\{ e_1,\dots,e_{3\hat{n}}\}$.

We create an instance $I'$ of \sumpop\ as follows.
\begin{itemize}
    \item [--] For each element $e_i\in A\cup B\cup C$, we add a house $e_i$,
    \item [--] For each set $S_j$, $j\in [3\hat{n}]$ we add four houses $t_j,p_j,q_j,x_j$ and 6 applicants $s_j^1,s_j^2,s_j^3,a_j,p_j'$ and $q_j'$.
\end{itemize}
The capacities are the following: $\quota [e_i]=1$, $\quota [t_j]=3$, $\quota [p_j]=2$, $\quota [q_j]=1$ for each $i,j\in [3\hat{n}]$. 
The preferences are described in the following table. Let $S_j=\{ e_{j_1},e_{j_2},e_{j_3} \}$, $j_1<j_2<j_3$.

\begin{center}
\begin{tabular}{ll}
    $s_j^1:$ & $e_{j_1}\succ p_j\succ t_j$ \\
    $s_j^2:$ & $e_{j_2}\succ p_j\succ t_j$\\
    $s_j^3:$ & $e_{j_3}\succ p_j\succ t_j$ \\
    $a_j:$ & $q_j\succ p_j\succ x_j$\\
    $q_j':$ & $q_j$  \\
    $p_j':$ & $p_j$
    
\end{tabular}
\end{center}
where $j,i\in [3\hat{n}]$.

\begin{claim}
\label{claim:perfpop4} 
There is a good capacity decrease vector $\pr$ with $|\pr|_1\le 2\hat{n}$ if and only if there is a good capacity change vector $\pr$ with $|\pr|_1\le 2\hat{n}$, if and only if $I$ admits an exact 3-cover.
\end{claim}
\begin{proof}
     First suppose there is an exact 3-cover $S_{j_1},\dots,S_{j_{\hat{n}}}$ in $I$. We show that there is a good capacity decrease vector with $|\pr |_1=2\hat{n}$. Denote $J=\{ j_1,\dots,j_{\hat{n}}\}$. Define $\pr$ as follows. $\pr [e_i]=0$, for each $i\in [3\hat{n}]$, $\pr [t_j]=0$ for each $j\in [3\hat{n}]$, $\pr [p_j] =0$, if $j\in J$, $\pr [p_j]=-1$, if $j\notin J$, $\pr [x_j]=0$, $\pr [q_j]=0$ for each $j\in [3\hat{n}]$. 

    Create a matching $M$ as follows. Add the edges $\{ (s_j^1,e_{j_1}),(s_j^2,e_{j_2}),(s_j^3,e_{j_3}),$ $(a_j,p_j),(p_j',p_j),(q_j',q_j)\}$, if $j\in J$ and the edges $\{ (s_j^1,t_j),(s_j^2,t_j),(s_j^3,t_j),(a_j,x_j), $ $ (p_j',p_j),(q_j',q_j)\}$ if $j\notin J$. 

    Clearly, $M$ is applicant-perfect and feasible with respect to $\quota +\pr$ and $|\pr |_1=2\hat{n}$. 

    We show that $M$ is popular. For each applicant $a_j$, either her best two houses have capacity one and one admirer, or $a_j$ is matched to her second house and her best house has capacity one, filled by an admirer. For the $s_j^1,s_j^2,s_j^3$ applicants, either all three are at their best houses, or their better houses have capacity one, which is filled by an admirer. Hence, they could only improve by replacing an applicant at their first house. It follows that in any matching $M$, the number of improving applicants is at most the number of disimproving ones, hence $M $ is popular. 

    The implication that if there is a good capacity decrease vector with $|\pr |_1\le 2\hat{n}$, then there is a good capacity change vector with $|\pr |_1\le 2\hat{n}$ is trivial.

    For the other direction, suppose that there is a good capacity change vector with $|\pr |_1\le 2\hat{n}$, such that $\quota +\pr$ admits an applicant perfect popular matching $M$. Note that in that matching it must hold that $p_j'$ is at $p_j$ and $q_j'$ is at $q_j$ for any $j$. We first show that whenever it holds that not all of $s_j^1,s_j^2,s_j^3$ are at $e_{j_1},e_{j_2},e_{j_3}$ respectively, then we have to modify the capacities of $p_j$ or $q_j$. Suppose not. Then, the $s_j^{\ell}$ applicant not at $e_{j_{\ell}}$ is either at $t_j$, while $a_j$ is at $p_j$ with $p_j'$ and $q_j'$ is at $q_j$ or she is at $p_j$ with $p_j'$, $a_j$ is at $x_j$ and $q_j'$ is at $q_j$. Neither would be popular. In the first case, $s_j^{\ell}$ could go to $p_j$ in $a_j$'s place and $a_j$ to $q_j$ (replacing $q_j'$) and in the second case, $a_j$ could go to $p_j$ in $s_j^{\ell}$'s place and $s_j^{\ell}$ to $e_{j_{\ell}}$ to obtain a matching that dominates $M$. 

    Let $l$ denote the number of indices $j\in [3\hat{n}]$ such that $s_j^1,s_j^2,s_j^3$ are all at $e_{j_1},e_{j_2},e_{j_3}$ respectively. If $|\pr |_1=\eta_1+\eta_2\le 2\hat{n}$, where $\eta_1$ is the sum of the capacity changes of the $e_i$ houses, then $l\le \hat{n}+\frac{\eta_1}{3}$, as the initial capacity of each $e_i$ house is 1. By our above observation, we get that the sum of capacity changes is at least $(3\hat{n}-l)+\eta_1\ge 2\hat{n} +\frac{2\eta_1}{3}$. As this is at most $2\hat{n}$ and $\eta_1\ge 0$, we get that $\eta_1=0$. 
    
    This implies that $l\le \hat{n}$. As $(3\hat{n}-l)\le 2\hat{n}$, $l=\hat{n}$. Combining this with $\eta_1=0$, we get that these $l$ indices correspond to $l$ sets, whose corresponding $s_j^1,s_j^2,s_j^3$ applicants are all at their first house and these sets must form an exact 3-cover. Therefore, $I$ admits an exact 3-cover.

\end{proof}

\end{proof}

\subsection{\minmax\ version}
Finally, we consider \maxpop. We show that \maxpop\ is NP-complete, even if $\maxcap
=1$ or $2$ depending on whether decreases are allowed. Then, we also show that is it NP-hard to approximate within any constant factor.

\begin{theorem}
    \maxpop\ is NP-complete in both of the following three cases: 1) only increases are allowed and $\maxcap =2$, 2) both increases and decreases are allowed and $\maxcap =1$.
\end{theorem}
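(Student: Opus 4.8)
For a yes-instance of \maxpop\ we can guess a capacity change vector $\pr$ with $\lmax\le\maxcap$ together with a matching $M$ of $(A,H,(\succ_a)_{a\in A},\quota+\pr)$, and then verify in polynomial time that $M$ is applicant-perfect and popular, the latter via the structural characterisation of \cref{thm:popperf}. Since $\pr$ and $M$ have polynomial size, this is a polynomial certificate, so \maxpop\ lies in NP in both cases.

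\textbf{The reduction.} I would reduce from \tdm. Given an instance with elements $A\cup B\cup C$ (each in exactly three triples) and family $\mathcal S=\{S_1,\dots,S_{3\hat n}\}$, I introduce one \emph{element house} $e_i$ of capacity $1$ per element and, for each set $S_j=\{e_{j_1},e_{j_2},e_{j_3}\}$, a \emph{selector gadget} built on the same local pattern as the $p_j,q_j,x_j$-houses and $a_j,p_j',q_j'$-applicants used in the proof that \sumpop\ is hard, together with applicants $s_j^1,s_j^2,s_j^3$ whose first choices are $e_{j_1},e_{j_2},e_{j_3}$ and whose remaining choices are internal gadget houses (including a private garbage house $t_j$ of capacity $3$). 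The gadget is designed to admit exactly two ``locally popular'' configurations: a \emph{selected} one in which $s_j^1,s_j^2,s_j^3$ occupy $e_{j_1},e_{j_2},e_{j_3}$, and an \emph{unselected} one in which they all occupy $t_j$; moreover, by \cref{thm:popperf} applied to the auxiliary graph $G'$, whenever a gadget is unselected \emph{each of its element houses must be full} in every popular matching --- otherwise the corresponding $s_j^\ell$ could move into the free slot and improve while harming nobody.

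\textbf{Correctness.} By \cref{cor:popperf} and the shape of $G'$, a popular applicant-perfect matching of the constructed instance yields a collection $\mathcal S'$ of selected sets such that every element is covered by at least one member of $\mathcal S'$ (popularity of the unselected gadgets) and every element house is full; since each $e_i$ has capacity $1$ and is usable only by the $s$-applicants of the sets running through $e_i$, this forces each element to be covered by \emph{exactly} one member of $\mathcal S'$, i.e.\ $\mathcal S'$ is an exact $3$-cover. Conversely, an exact $3$-cover is realised by putting the gadgets of the covering sets into their selected configuration and all other gadgets into their unselected one. It remains to account for the capacity changes: switching a single gadget out of its default configuration costs exactly one $\pm 1$ change when decreases are permitted, so an exact cover is realisable with $\lmax=1$; when only increases are allowed I take the \emph{unselected} configuration as the zero-change default and re-selecting a gadget costs a single $+2$ increase inside it (one unit is not enough to reshape the relevant $s(\cdot)$-edges of $G'$), so an exact cover is realisable with $\lmax=2$. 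In neither regime does any vector $\pr$ with $\lmax\le\maxcap$ create a popular perfect matching when no exact $3$-cover exists, which gives NP-hardness of \maxpop\ in both cases.

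\textbf{Main obstacle.} The delicate part is that last claim: since capacity changes of size up to $\maxcap$ are available at essentially every house, I must rule out ``cheating'', above all the strategy of selecting many sets and simply enlarging the over-covered element houses. Preventing this requires attaching a small rigidity sub-gadget to each element house (forcing its effective capacity to stay at $1$ throughout the admissible perturbations) and then carrying out an invariance analysis, through the $G'$-characterisation of \cref{thm:popperf}, showing that for every $\pr$ with $\lmax\le\maxcap$ the graph $G'(\quota+\pr)$ still confines each $s_j^\ell$ to ``its element house or $t_j$'' and leaves the selected/unselected configurations as the only popular ones. Making this rigidity argument work simultaneously for the increases-only regime (where the zero-change default must sit on the \emph{opposite} configuration) and pinning the thresholds at exactly $1$ and $2$ --- rather than merely proving them sufficient --- is where the real effort goes.
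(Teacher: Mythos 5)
Your NP-membership argument and the choice of \tdm\ as the source problem match the paper, but the hardness reduction has a genuine gap: you have no mechanism that forces a minimum number of gadgets to be \emph{selected} under an $\ell_\infty$ budget. The per-gadget accounting you import from the \sumpop\ reduction (``switching a gadget costs one $\pm 1$ change'') is exactly the kind of argument that dies when the budget is on the maximum rather than the sum: with $\maxcap=1$ and decreases allowed, the adversary can pay that $-1$ at \emph{every} gadget simultaneously (e.g.\ shrink every $p_j$ to kill its free slot) and realise the all-unselected configuration as a perfect popular matching with $|\pr|_{\infty}=1$, so your instance is a yes-instance whether or not an exact cover exists. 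Your fallback argument for coverage --- ``if an element house has a free slot, some $s_j^\ell$ moves in and improves'' --- is defeated by the very rigidity dummies you say you need: once a dummy applicant pins $e_i$'s effective capacity, the house is full with the dummy alone and no $s$-applicant is forced into it; without the dummies, the adversary inflates the element houses and exactness fails. You correctly identify this tension as ``the main obstacle,'' but you do not resolve it, and the resolution is not a refinement of your local gadgets.

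The paper resolves it with a single global device that is absent from your construction: a collector house $x$ of initial capacity $2\hat n-1$ (resp.\ $2\hat n-2$), which is the last-resort house of one applicant $a_j$ per gadget whose placement encodes whether gadget $j$ is unselected. Since the capacity of $x$ can rise to at most $2\hat n$ under any admissible $\pr$, at most $2\hat n$ of the $3\hat n$ gadgets can be unselected, hence at least $\hat n$ are selected; the element-house dummies ($e_i'$, resp.\ $e_i^1,e_i^2$) then cap each element house at one $s$-applicant, so exactly $\hat n$ gadgets are selected and their sets form an exact $3$-cover. This converts the \emph{count} of unselected gadgets into \emph{load on one house}, which is essentially the only way to count under the max norm. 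A fourth applicant $s_j^4$ per set and the houses $p_j,q_j,t_j$ are then tuned so that ``$a_j$ not at $x$'' provably forces three of $s_j^1,\dots,s_j^4$ onto element houses via local popularity arguments. Without such a global bottleneck your reduction cannot be completed, and the thresholds $\maxcap=1$ and $\maxcap=2$ cannot be pinned down in the way you sketch.
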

\begin{proof}
As before, given a good capacity change or increase vector we can verify it in polynomial-time, so the problem is in NP.

We provide a reduction from \tdm. Let $I$ be an instance of \tdm. Let $A\cup B\cup C=\{ e_1,\dots, e_{3\hat{n}}\}$.

We start by describing the base construction for both reductions. 

\begin{itemize}
    \item [--] For each element $e_i$, we add a house $e_i$ and an applicant $e_i'$,
    \item [--] For each set $S_j$, we add three houses $t_j,p_j,q_j$ and 6 applicants $s_j^1,s_j^2,s_j^3,s_j^4,p_j'$ and $a_j$,
    \item [--] We add a collector house $x$.
\end{itemize}

Let us start with the case when both increases and decreases are allowed. Let $\maxcap =1$.
For each set $S_j$, we further add two applicants $q_j^1,q_j^2$. Define the capacities of the houses as follows: Each house $e_i$ has capacity 1, each house $t_j$ has capacity 4, each house $p_j$ has capacity 2, each house $q_j$ has capacity 1 and house $x$ has capacity $2\hat{n}-1$ initially.
The preferences are described in the following table. Let the $j$-th set be $S_j=\{ e_{j_1},e_{j_2},e_{j_3}\}$, $j_1<j_2<j_3$.

\begin{center}
\begin{tabular}{ll}
    $s_j^1:$ & $e_{j_1}\succ p_j\succ t_j$ \\
    $s_j^2:$ & $e_{j_2}\succ p_j\succ t_j$\\
    $s_j^3:$ & $e_{j_3}\succ p_j\succ t_j$ \\
    $s_j^4:$ & $e_{j_3}\succ p_j\succ t_j$ \\
    $a_j:$ & $q_j\succ p_j\succ x$\\
    $q_j^1:$ & $q_j$ \\
    $q_j^2:$ &$q_j$ \\
    $p_j':$ & $p_j$ \\
    $e_i':$ & $e_i$
    
\end{tabular}
\end{center}
where $j\in [3\hat{n}]$.

\begin{claim}
    \label{claim:popperf1}
    There is a good capacity change vector $\pr$ with $|\pr|_{\infty}\le 1$ if and only if $I$ admits an exact 3-cover. 
\end{claim}
\begin{proof}
    For the first direction, suppose there is an exact 3-cover $S_{j_1},\dots,S_{j_{\hat{n}}}$. Denote $J=\{ j_1,\dots,j_{\hat{n}}\}$. Define $\pr$ as follows. $\pr [e_i]=1$, for each $i\in [3\hat{n}]$, $\pr [t_j]=0$ for each $j\in [3\hat{n}]$, $\pr [p_j] =1$, if $j\in J$, $\pr [p_j]=-1$, if $j\notin J$, $\pr [x]=1$ and $\pr [q_j]=1$ for each $j\in [3\hat{n}]$. 

    Create a matching $M$ as follows. Add the edges $\{ (s_j^1,e_{j_1}),(s_j^2,e_{j_2}),(s_j^3,e_{j_3}),(s_j^4,p_j),$ $(p_j',p_j),(a_j,p_j),(q_j^1,q_j),(q_j^2,q_j)\}$, if $j\in J$ and the edges $\{ (s_j^1,t_j),(s_j^2,t_j),(s_j^3,t_j),(s_j^4,t_j),$ $(p_j',p_j),(a_j,x),(q_j^1,q_j),(q_j^2,q_j)\}$ if $j\notin J$. Finally, we add the edges $\{ (e_i',e_i)\mid i\in [3\hat{n}]\}$.

    Clearly, $M$ is applicant-perfect and $|\pr |_{\infty}=1$. $M$ is feasible, as only $2\hat{n}$ $a_j$ applicants are matched to $x$.

    It only remains to show that $M$ is popular. For each applicant $a_j$,  her better houses are filled with admirers. So when $a_j$ could improve, she would have to replace $q_j^1,q_j^2$ or $p_j'$, who could not improve. Same holds for the $s_j^4$ applicants, their better houses are filled by admirers. For the $s_j^1,s_j^2,s_j^3$ applicants, either all three are at their best houses, or they are at their worst houses, but their better houses are filled with admirers. It follows that in any matching $M'$, the number of improving applicants is at most the number of disimproving ones, hence $M$ is popular. 

    For the other direction suppose that there is a good capacity change vector with $|\pr |_{\infty}=1$, such that $\quota +\pr$ admits an applicant perfect popular matching $M$. As $|\pr |_{\infty}\le 1$, it holds that there are at most $2\hat{n}$ $a_j$ applicants who are assigned to the collector house $x$. Let $J=\{ j_1,\dots,j_l\}$, $l\ge \hat{n}$ be the indices such that $a_{j_i}$ is not at house $x$ in $M$. We claim that for each such $j\in J$, it holds that three of $s_j^1,s_j^2,s_j^3,s_j^4$ are matched to $\{ e_{j_1},e_{j_2},e_{j_3}\}$. As the quota of these houses can be at most two, but $e_i'$ must be at $e_i$, it then follows that $l= \hat{n}$ and the sets corresponding to the indices in $J$ form an exact 3-cover. So suppose for the contrary, that there is a $j\in J$ such that two of $s_j^1,s_j^2,s_j^3,s_j^4$ are not at an $e_i$ house. Observe that house $p_j$ has capacity at most 3 after the changes. If both of them are at $p_j$ with $p_j'$ who must be there ($M$ is perfect), then $a_j$ must not be there. As $M$ is applicant perfect, $q_j^1,q_j^2$ are matched to $q_j$ and by  $|\pr |_{\infty}\le 1$, $q_j$ has capacity 2. Therefore, $a_j$ is at $x$, contradicition. Hence, one of the $s_j^{\ell}$ applicants must be at $t_j$. As $j\in J$, we know that $a_j$ must be matched to $p_j$ along with $p_j'$ (it is  not as $x$, and cannot be at $q_j$). Hence, if this $s_j^{\ell}$ applicant goes to $p_j$ to $a_j$'s place and $a_j$ goes to $q_j$, then 2 applicants improve and only one gets worse, so we would obtain a matching that dominates $M$, contradiction again.
    Therefore, $I$ admits an exact 3-cover.
\end{proof}

For the case when only increases are allowed, we modify the construction as follows. We change the initial capacity of $p_j$ to 1 for all $j$, and change the capacity of $x$ to $2\hat{n}-2$. Finally, instead of one dummy $e_i'$ and two dummy $q_j^1,q_j^2$ applicants, we have two $e_i^1,e_i^2$ (who all consider only $e_i$ acceptable) and three $q_j^1,q_j^2,q_j^3$ (who all consider only $q_j$ acceptable) applicants. Finally, let $\maxcap =2$ in this case.


    

\begin{claim}
\label{claim:perfpop2} 
There is a good capacity increase vector $\pr$ with $|\pr|_{\infty}\le 2$ if and only if $I$ admits an exact 3-cover.
\end{claim}
\begin{proof}
 For the first direction, suppose there is an exact 3-cover $S_{j_1},\dots,S_{j_{\hat{n}}}$. Denote $J=\{ j_1,\dots,j_{\hat{n}}\}$. Define $\pr$ as follows. $\pr [e_i]=2$, for each $i\in [3\hat{n}]$, $\pr [t_j]=0$ for each $j\in [3\hat{n}]$, $\pr [p_j] =2$, if $j\in J$, $\pr [p_j]=0$, if $j\notin J$, $\pr [x]=2$, $\pr [q_j]=2$ for each $j\in [3\hat{n}]$. 

   Create a matching $M$ as follows. Add the edges $\{ (s_j^1,e_{j_1}),(s_j^2,e_{j_2}),(s_j^3,e_{j_3}),$ $(s_j^4,p_j),(a_j,p_j)\}$, if $j\in J$ and the edges $\{ (s_j^1,t_j),(s_j^2,t_j),(s_j^3,t_j),(s_j^4,t_j),(a_j,x)\}$ if $j\notin J$. Finally, match each dummy applicant $e_i^l,q_i^l,p_i'$ to their only houses.

 By similar reasoning as in the previous case, it is straightforward to verify that $M$ is feasible and popular.  


For the other direction suppose that there is a good capacity increase vector with $|\pr |_{\infty}\le 2$, such that $\quota +\pr$ admits an applicant perfect popular matching $M$. As $|\pr |_{\infty}\le 2$, it holds that there are at most $2\hat{n}$ $a_j$ applicants who are assigned to the collector house $x$. Let $J=\{ j_1,\dots,j_l\}$, $l\ge \hat{n}$ be the indices such that $a_{j_i}$ is not at house $x$. We claim that for each such $j\in J$, it holds that three of $s_j^1,s_j^2,s_j^3,s_j^4$ are matched to $\{ e_{j_1},e_{j_2},e_{j_3}\} $. As the quota of these houses can be at most 3 and two of the places must be occupied by $e_i^1,e_i^2$, it then follows that $l= \hat{n}$ and the sets corresponding to the indices in $J$ form an exact 3-cover. 
    
So suppose for the contrary, that there is a $j\in J$ such that two of $s_j^1,s_j^2,s_j^3,s_j^4$ are not at a house $e_i$. Observe that house $p_j$ has capacity at most 3 after the changes and one place filled with $p_j'$. If both of them are at $p_j$, then $a_j$ must not be there. As $M$ is applicant perfect, $q_j^1,q_j^2,q_j^3$ are matched to $q_j$ and by  $|\pr |_{\infty}\le 2$, $q_j$ has capacity exactly 3. Hence, $a_j$ is at $x$, contradicting $j\in J$.  Therefore, one of the $s_j^{\ell}$ applicants must be at $t_j$. As $j\in J$, we know that $a_j$ must be matched to $p_j$. Hence, if this $s_j^{\ell}$ applicant goes to $p_j$ and $a_j$ goes to $q_j$ (replacing $q_j^1$), 2 applicants improve and only one gets worse, so we would obtain a matching that dominates $M$, contradiction again.
    Therefore, $I$ admits an exact 3-cover.

\end{proof}

The statements of the theorem follows from these two claims.
\end{proof}

Finally, we show that (the optimization version of) \maxpop\ cannot be approximated within any constant factor, if $P\ne NP$.

\begin{theorem}
    There is no polynomial-time $d$-approximation for \maxpop, for any constant $d$, if $P\ne NP$. This holds both when only increases are allowed and when increases and decreases are both allowed.
\end{theorem}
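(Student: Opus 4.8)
The plan is to upgrade the NP-hardness reduction of the previous theorem into a \emph{gap-preserving} reduction from \scov, and then invoke the fact recalled in \cref{sec:defi} that \scov\ admits no constant-factor approximation unless $P=NP$. Given a \scov\ instance $(E,\mathcal S)$ with minimum cover size $k^{*}$, I would build a capacitated house allocation instance $I'$ containing one distinguished \emph{collector} house $x$ of base capacity $0$, together with one gadget per set $S_j$ and one per element $e_i$, wired so that two things hold. First, in every popular perfect matching of $(A,H,(\succ_a)_{a\in A},\quota+\pr)$ the set-gadgets that end up in their ``on'' state push one applicant each onto $x$, and the ``on'' sets are forced to cover $E$; hence $\lmax\ge \pr[x]\ge \#\{\text{on sets}\}\ge k^{*}$. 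Second, conversely, from any cover $\mathcal S'$ one reads off a vector $\pr$ with $\pr[x]=|\mathcal S'|$ and every other coordinate bounded by the constant $c\in\{1,2\}$ inherited from the earlier reduction, for which $(A,H,(\succ_a)_{a\in A},\quota+\pr)$ admits a popular perfect matching. This yields $\maxcap_{\opt}(I')=\max(k^{*},c)$, which equals $k^{*}$ once $(E,\mathcal S)$ is padded with a constant number of private singleton sets so that $k^{*}\ge c$. A polynomial $d$-approximation for \maxpop\ then returns a good vector $\pr$ with $\lmax\le d\,k^{*}$; running the polynomial-time characterisation of popular (perfect) matchings under fixed capacities (\cref{thm:popperf}, \cref{cor:popperf}, \cite{manlove2006popular}) on $(A,H,(\succ_a)_{a\in A},\quota+\pr)$ recovers a popular perfect matching, whose ``on'' sets form a cover of size $\le \pr[x]\le d\,k^{*}$, i.e.\ a $d$-approximation for \scov\ --- impossible unless $P=NP$. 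This argument is run uniformly over all constants $d$.

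For the gadgets I would keep the machinery of the earlier \maxpop\ reduction almost intact: each set $S_j$ retains houses $t_j,p_j,q_j$ and applicants $s_j^1,s_j^2,s_j^3,a_j,p_j'$ together with the $q_j^\ell$ dummies, so that the same popularity swap argument still forces the dichotomy ``$a_j$ is pushed to $x$'' $\Longleftrightarrow$ ``the three applicants $s_j^1,s_j^2,s_j^3$ are driven onto the element houses $e_{j_1},e_{j_2},e_{j_3}$''. The element house $e_i$ keeps a mandatory occupant, but --- because in a covering (as opposed to an \emph{exact} covering) an element may be hit several times --- its capacity is now permitted to rise to $1+r_i$, where $r_i$ counts the ``on'' sets through $e_i$; this is harmless for the gap because $r_i\le \#\{\text{on sets}\}\le k^{*}$, so $\pr[e_i]$ never dominates $\pr[x]$. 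The two regimes of the statement are handled by the two variants of the base construction already used in the hardness proof (only increases, with $\maxcap$ rescaled around $2$; increases and decreases, around $1$), with the dummy applicants $q_j^\ell$, $p_j'$ and $e_i^\ell$ tuned so that the relevant houses are forced ``full'' at the intended capacities.

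The main obstacle is the coverage-enforcement step: I must ensure that in \emph{every} popular perfect matching, under \emph{every} admissible capacity perturbation, each element house $e_i$ is occupied by an $s$-applicant of some ``on'' set through $e_i$ --- i.e.\ the adversary cannot dodge coverage by tuning $\pr[e_i]$ (shrinking $e_i$ so that it is exactly full of dummies, or enlarging it so that a surplus slot must itself be covered). This is precisely where the exact-cover slot-counting of the previous reduction has to be replaced by a more robust argument: I expect to route a mandatory applicant of $e_i$ through a short private ``overflow'' house so that leaving $e_i$ uncovered always exposes a dominating alternative --- two applicants strictly improve while at most one worsens --- that no local capacity change can neutralise, whereas a redundant ``on'' set merely slides its surplus $s$-applicant onto a private fallback at unit cost. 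Working out these preference lists and verifying the domination claims in all boundary cases, together with the routine padding that removes the additive $c$ from $\maxcap_{\opt}$, is where the effort lies; everything upstream of this is inherited from the earlier reduction and from \cref{cor:popperf}.
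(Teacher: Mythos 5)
There is a genuine gap, and it sits exactly where you yourself locate ``the effort.'' The paper also reduces from \scov, but its construction differs from yours in one essential respect: it uses \emph{amplification}, and without it your reduction cannot bound the number of selected sets by the approximation guarantee. Concretely, the paper does not reuse the \tdm\ gadgets at all. It turns elements into \emph{applicants} whose only options are a blocked house $f$ and singleton-capacity houses $s_j^1,\dots,s_j^{l_j}$ (one per element of $S_j$, each guarded by a dummy admirer), so that ``set $S_j$ is selected'' means ``some $s_j^l$ increased its capacity.'' To bound the number of selected sets it creates $N=\hat n^2\hat m$ collector houses $x^1,\dots,x^N$ of capacity $1$ and, for every set $S_j$, a block of $N$ applicants $a_j^l$ with preference $x^l\succ s_j^1\succ\dots\succ s_j^{l_j}\succ w_j$. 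Popularity forces each $a_j^l$ of a selected set either into $s_j^p$ or into $x^l$, so a selected set pushes at least $N-dk$ applicants onto the collectors, while a vector with $|\pr|_{\infty}\le dk$ creates at most $dkN$ collector slots in total; hence at most about $dk$ sets are selected, these cover all but the $\le dk$ elements parked at $f$, and patching with $dk$ further sets yields a $2d$-approximation for \scov. Your single-collector plan has no analogous mechanism: under a per-coordinate budget of $d\cdot\opt$, the adversary can raise the capacity of $q_j$, $p_j$ or $t_j$ by $d\cdot\opt$ and absorb $a_j$ locally, so the dichotomy ``$a_j$ at $x$ iff $S_j$ is on'' --- which the NP-hardness proof establishes only for $|\pr|_{\infty}\le 1$ or $2$ --- collapses, and $\pr[x]$ no longer lower-bounds the number of on sets. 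Stuffing every escape house with enough dummy admirers to keep it saturated under a $d\cdot\opt$ increase would need $\Omega(\opt)$ dummies per house (not a constant) and would simultaneously destroy the intended matching in which $a_j$ \emph{does} occupy $p_j$.

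The second unresolved point is coverage enforcement, which you flag as the main obstacle but leave as a sketch (``route a mandatory applicant through a private overflow house \dots is where the effort lies''). The paper sidesteps this difficulty entirely rather than solving it head-on: because elements are applicants, perfectness alone forces every element applicant either to $f$ (at most $dk$ of them, since $f$ starts at capacity $1$ with a dummy admirer and can grow by at most $dk$) or to an $s_j^l$ house of a set that increased its capacity; it then \emph{accepts} that up to $dk$ elements may be uncovered and repairs this by adding $dk$ extra sets, settling for a $2d$- rather than a $d$-approximation, which is all that is needed since $d$ ranges over all constants. So the two load-bearing ideas of the paper's proof --- the $N$-fold amplification that makes a per-house budget of $d\cdot\opt$ negligible, and the ``cover all but $dk$ elements, then patch'' relaxation --- are both absent from your proposal, and the parts you defer are precisely the parts that do not go through with the \tdm-style gadgets.
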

\begin{proof}
    We use the fact that \scov\ does not admit a constant factor polynomial time approximation, if $P\ne NP$ \cite{feige1996threshold}.

Let $d$ be an arbitrary constant.
We show that if there would be an algorithm $\mathcal{A}'$ for \maxpop\ that given an instance $I'$ of \maxpop, can find a good capacity change vector $\pr$ with $|\pr |_{\infty}\le d\cdot  OPT(I')$, then we could construct an algorithm $\mathcal{A}$ that given an \scov\ instance $I$, can find a set cover using at most $2d\cdot OPT(I)$ sets, which would give a constant factor approximation for \scov.
Let $I=(\mathcal{S},E)$ be a instance of \scov\ and let $k=OPT(I)<\infty$. Create an instance $I'$ of \maxpop\ as follows. Let $N=\hat{n}^2\hat{m}$ and $l_j=|S_j|$ for $j\in [\hat{m}]$. We can suppose that $\hat{n}>k$ holds, as otherwise it is easy to find a set cover with $\hat{n}$ sets in a greedy way. We can also suppose that $d<\hat{n}$ also holds, as $d$ is a constant.

    \begin{itemize}
        \item [--] For each set $S_j$, create houses $s_j^1,\dots,s_j^{l_j}$ with capacity 1 and dummy applicants $t_j^1,\dots,t_j^{l_j}$. Also, create a house $w_j$ with capacity $N$ and applicants $a_j^1,\dots,a_j^N$.
        \item[--] For each element $e_i\in E$, we create an applicant $e_i$.
        \item[--] We create a house $f$ with capacity 1 and a dummy applicant $f'$ and we create houses $x^1,\dots,x^N$ with capacity 1 and dummy applicants $y^1,\dots,y^N$.
    \end{itemize}

We set the preferences as follows.
For an element $e_i\in E$, let $\mathcal{S}_i$ be the set of those $s_j^l$ houses, such that $e_i\in S_j$, and $e_i$ is the $l$-th smallest index element in $S_j$.

\begin{center}
\begin{tabular}{ll}
    $e_i:$ & $f\succ  \seqq{\mathcal{S}_i}$ \\
    $f':$ & $f$ \\
    $a_j^l:$ & $x^l\succ s_j^1\succ \dots \succ s_j^{l_j}\succ w_j$ \\
    $t_j^{l}:$ & $s_j^{l}$ \\
    $y^l:$ & $x^l$,
    
\end{tabular}
\end{center}
for $j\in [\hat{m}], i\in [\hat{n}], l\in [N]$.

We first show that $OPT(I')\le k$. Let $S_{j_1},\dots,S_{j_k}$ be a set cover with $k $ sets in $I$. Create a capacity increase vector $\pr$ as follows: set $\pr [s_j^l]=1$ for each $j\in \{ j_1,\dots,j_k\}$ and each $l$. Also, set $\pr [x^l]=k$ for each $l\in [N]$. For every other house, we do not change the capacities. 

Clearly, $|\pr |_{\infty} =k$. Let $M$ be the following matching. We assign each dummy applicant $y^l$ to $x^l$, $t_j^l$ to $s_j^l$ and $f'$ to $f$. Then, for each applicant $e_i$, we assign her to the best house in $\seqq{\mathcal{S}_i}$ that increased its capacity. Clearly, there must be such a house, as the $k$ set covered every element. Also, each $s_j^l$ house is acceptable to only one $e_i$ applicant (the one with the $l$-th smallest index in $S_j$), so there are no collisions. Finally, for an $a_j^l$ applicant, if $j\in \{ j_1,\dots,j_k\}$, then we assign her to $x^l$, otherwise we assign her to $w_j$. As we used exactly $k$ sets, all $x^l$ house gets exactly $k+1$ applicants, all of them being admirers. $M$ is clearly perfect. We claim that $M$ is popular. Indeed, for any applicant, it holds that any house she considers better than the one she is assigned to is filled with admirers. Hence, no set of applicants could improve, without making at least as many applicants worse off. 

Now, suppose that we have an algorithm $\mathcal{A}'$ that finds a good capacity change vector $\pr$ with $|\pr |_{\infty}\le d\cdot OPT(I')\le dk$. Clearly, no $s_j^l$, $x^l$ house nor $f$ can decrease its capacity, otherwise their dummy applicants cannot be assigned. Let $M$ be a perfect popular matching with the new capacities $\quota +\pr$. Then, at most $dk$ of the $e_i$ applicants can be at $f$, so at least $\hat{n}-dk$ of them are assigned to $s_j^l$ houses. Let $J=\{ j_1,\dots,j_{\ell}\}$ be the set of those $j\in [\hat{m}]$ indices, such that there is an $s_j^l$ house that increased its capacity. Fix a $j\in J$ and let $s_j^p$ be the house with smallest upper index who increased its capacity. We claim that all of $a_j^1,\dots,a_j^N$ are either at $s_j^p$ or at their best houses. If $a_j^1$ is not at $x^1$ or $s_j^p$, then $a_j^1$ can only be at a worse house for her than $s_j^p$. Hence, we could create a matching $M'$ dominating $M$ as follows. Let $a_j^1$ go to $s_j^p$ and let someone (other than $t_j^p$) from $s_j^p$ switch to her best house. There must be such a student at $s_j^p$, as $s_j^p$ increased its capacity and there can be no free places at an envied house in $M$. Hence, we get that $M$ would not be popular, contradiction. Similarly, we get that all of $a_j^1,\dots,a_j^N$ are either at $s_j^p$ or their best house. Also, at most $dk$ of them can be at $s_j^p$. From this, we get that the number of places at the $x^l$ houses that $a_j^l$ applicants fill is at least $|J|\cdot (N-dk)$. As $|\pr |_{\infty}\le dk$, the number of such places is at most $dkN$. Hence, $dkN\ge |J|\cdot (N-dk)$, so $|J|\le \frac{dk(N+|J|)}{N}\le dk+\frac{dk|J|}{\hat{n}^2\hat{m}}< dk+\frac{\hat{n}^2\hat{m}}{\hat{n}^2\hat{m}}$, by using that $N=\hat{n}^2\hat{m}$, $|J|\le \hat{m}$ and $dk< \hat{n}^2$. 
Therefore, we can conclude that there are at most $dk$ such sets $S_j\in \mathcal{S}$, such that a corresponding $s_j^l$ house has increased capacity. This implies that the sets with indices from $J$ cover at least $\hat{n}-dk$ elements. Clearly, from these sets, we can construct a set cover using at most $2dk$ sets, by adding at most $dk$ new sets, each covering a new element from the uncovered $dk$ ones (this is possible, as there is a set covering each element). Hence, we could construct a $2d$-approximation algorihm for \scov, a contradiction.

\end{proof}

\section{Acknowledgements}
Gergely Csáji  acknowledges the financial support by the Hungarian Scientific Research Fund, OTKA, Grant No. K143858, by the Momentum Grant of the Hungarian Academy of Sciences, grant number 2021-1/2021 and by the Ministry of Culture and Innovation of Hungary from the National Research, Development and Innovation fund, financed under the KDP-2023 funding scheme (grant number C2258525).
\clearpage

\bibliographystyle{ACM-Reference-Format} 

\bibliography{cit}

\clearpage

\end{document}
